\newtheorem{definition}{Definition}[]
\newtheorem{theorem}{Theorem}[]
\newtheorem{lemma}[theorem]{Lemma}
\newtheorem{corollary}[theorem]{Corollary}
\newtheorem{proposition}[theorem]{Proposition}
\newcommand{\floor}[1]{\ensuremath{\left\lfloor{#1}\right\rfloor}}%
\newcommand{\ceil}[1]{\ensuremath{\left\lceil{#1}\right\rceil}}%
\newcommand{\ignore}[1]{}%
\newcommand{\ProblemFormat}[1]{{\sc #1}}
\newcommand{\ProblemName}[1]{\ProblemFormat{#1}\xspace}
\newcommand{\mycase}[1]{\noindent{\bf CASE #1\ }}
\newcommand{\gain}{{\rm gain}}
\newcommand{\tw}{{\rm tw}}
\newcommand{\pw}{{\rm pw}}
\newcommand{\probAPSP}{\ProblemName{All Pairs Shortest Paths}}
\newcommand{\probKOPTOpt}{\ProblemName{$k$-opt Optimization}}
\newcommand{\probKOPTDec}{\ProblemName{$k$-opt Detection}}
\newcommand{\probFiveOPTOpt}{\ProblemName{$5$-opt Optimization}}
\newcommand{\probOPTOpt}[1]{\ProblemName{$#1$-opt Optimization}}
\newcommand{\probOPTDec}[1]{\ProblemName{$#1$-opt Detection}}
\newcommand{\probNegativeTriangle}{\ProblemName{Negative Edge-Weighted Triangle}}
\newcommand{\heading}[1]{\medskip\noindent{\bf #1.\ }}%
\newcommand{\Mm}{{\ensuremath{\mathcal{M}}}}
\newcommand{\Tt}{{\ensuremath{\mathcal{T}}}}
\DeclareMathOperator{\polylog}{polylog}
\begin{document}
\title{Improving TSP tours using dynamic programming over tree decompositions\thanks{The work of M. Cygan and \L. Kowalik is a part of the project TOTAL that has received funding from the European Research Council (ERC) under the European Union’s Horizon 2020 research and innovation programme (grant agreement No 677651).
 A. Soca{\l}a is supported by the National Science Centre of Poland, grant number 2013/09/B/ST6/03136.} 
}

\date{}

\author{Marek Cygan \quad \L ukasz Kowalik \quad Arkadiusz Soca\l a \\
{\small Institute of Informatics, University of Warsaw, Poland}\\
{\small \texttt{\{cygan,a.socala,kowalik\}@mimuw.edu.pl}}}

\maketitle

\begin{abstract}
Given a traveling salesman problem (TSP) tour $H$ in graph $G$ a $k$-move is an operation which removes $k$ edges from $H$, and adds $k$ edges of $G$ so that a new tour $H'$ is formed. 
The popular $k$-OPT heuristic for TSP finds a local optimum by starting from an arbitrary tour $H$ and then improving it by a sequence of $k$-moves.

Until 2016, the only known algorithm to find an improving $k$-move for a given tour was the naive solution in time $O(n^k)$. 
At ICALP'16 de~Berg, Buchin, Jansen and Woeginger showed an $O(n^{\floor{2/3k}+1})$-time algorithm.

We show an algorithm which runs in $O(n^{(1/4+\epsilon_k)k})$ time, where $\lim_{k\rightarrow\infty} \epsilon_k = 0$. 
It improves over the state of the art for every $k\ge 5$.
For the most practically relevant case $k=5$ we provide a slightly refined algorithm running in $O(n^{3.4})$ time.
We also show that for the $k=4$ case, improving over the $O(n^3)$-time algorithm of de~Berg et al. would be a major breakthrough: an $O(n^{3-\epsilon})$-time algorithm for any $\epsilon>0$ would imply an $O(n^{3-\delta})$-time algorithm for the \probAPSP problem, for some $\delta>0$.
\end{abstract}

\section{Introduction}

In the Traveling Salesman Problem (TSP) one is given a complete graph $G=(V,E)$ and a weight function $w:E\rightarrow\mathbb{N}$.
The goal is to find a Hamiltonian cycle in $G$ (also called a {\em tour}) of minimum weight.
This is one of the central problems in computer science and operation research.
It is well known to be NP-hard and has been researched from different perspectives, most notably using approximation~\cite{arora,christofides,SeboV14}, exponential-time algorithms~\cite{held-karp,karp} and heuristics~\cite{PadbergR91,LinK73,croes1958}.

In practice, TSP is often solved by means of local search heuristics where we begin from an arbitrary Hamiltonian cycle in $G$, and then the cycle is modified by means of some local changes in a series of steps.
After each step the weight of the cycle should improve; when the algorithm cannot find any improvement it stops.
One of the most successful examples of this approach is the $k$-opt heuristic, where in each step an improving $k$-move is performed. 
Given a Hamiltonian cycle $H$ in a graph $G=(V,E)$ a {\em $k$-move} is an operation that removes $k$ edges from $H$ and adds $k$ edges of $G$ so that the resulting set of edges $H'$ is a new Hamiltonian cycle. 
The $k$-move is {\em improving} if the weight of $H'$ is smaller than the weight of $H$.
The $k$-opt heuristic has been introduced in 1958 by Croes~\cite{croes1958} for $k=2$, and then applied for $k=3$ by Lin~\cite{lin1965computer} in 1965.
Then in 1972 Lin and Kernighan designed a complicated heuristic which uses $k$-moves for unbounded values of $k$, though restricting the space of $k$-moves to search to so-called sequential $k$-moves. A variant of this heuristic called LKH, implemented by Helsgaun~\cite{Helsgaun00}, solves optimally instances up to 85\,900 cities.
Among other modifications, the variant searches for non-sequential 4- and 5-moves.
From the theory perspective, the quality of the solutions returned by $k$-opt, as well as the length of the sequence of $k$-moves needed to find a local optimum, was studied, among others, by Johnson, Papadimitriou and Yannakakis~\cite{jpy88}, Krentel~\cite{krentel} and Chandra, Karloff and Tovey~\cite{chandra99}.
More recently, smoothed analysis of the running time and approximation ratio was investigated by Manthey and Veenstra~\cite{smoothed1} and K{\"{u}}nnemann and Manthey~\cite{smoothed2}.

In this paper we study the $k$-opt heuristic but we focus on its basic ingredient, namely on finding a single improving $k$-move.
The decision problem \probKOPTDec is to decide, given a tour $H$ in an edge weighted complete graph $G$, if there is an improving $k$-move.
In its optimization version, called \probKOPTOpt, the goal is to find a $k$-move that gives the largest weight improvement, if any.
Unfortunately, this is a computationally hard problem.
Namely, Marx~\cite{marx08} has shown that \probKOPTDec is $W[1]$-hard, which means that it is unlikely to be solvable in $f(k)n^{O(1)}$ time, for any function $f$.
Later Guo, Hartung, Niedermeier and Such{\'{y}}~\cite{GuoHNS13} proved that there is no algorithm running in time $n^{o(k/\log k)}$, unless Exponential Time Hypothesis (ETH) fails.
This explains why in practice people use exhaustive search running in $O(n^k)$ time for every fixed $k$, or faster algorithms which explore only a very restricted subset of all possible $k$-moves.

Recently, de~Berg, Buchin, Jansen and Woeginger~\cite{BergBJW16} have shown that it is possible to improve over the naive exhaustive search.
For every fixed $k\ge 3$ their algorithm runs in time $O(n^{\floor{2k/3}+1})$ and uses $O(n)$ space. 
In particular, it gives $O(n^3)$ time for $k=4$. 
Thus, the algorithm of de~Berg et al.\ is of high practical interest: the complexity of the $k=4$ case now matches the complexity of $k=3$ case, and hence it seems that one can use 4-opt in all the applications where 3-opt was fast enough.
De~Berg et al.\ show also that a progress for $k=3$ is unlikely, namely they show that \probKOPTDec has an $O(n^{3-\epsilon})$-time algorithm for some $\epsilon>0$ iff {\sc All Pairs Shortest Paths} problem can be solved in $O(n^{3-\delta})$-time algorithm for some $\delta>0$.

\heading{Our Results} 
In this paper we extend the line of research started in~\cite{BergBJW16}: we show an algorithm running in time $O(n^{(1/4+\epsilon_k)k})$ and using space $O(n^{(1/8+\epsilon_k)k})$ for every fixed $k$, where $\lim\epsilon_k=0$. 
We are able to compute the values of $\epsilon_k$ for $k\le 10$.
These values show that our algorithm improves the state of the art for every $k=5,\ldots,10$ (see Table~\ref{tab:comparision}).
A different adjustment of parameters of our algorithm results in time $O(n^{k/2+3/2})$ and additional space of $O(\sqrt{n})$, which improves the state of the art for every $k \ge 8$.

We also show a good reason why we could not improve over the $O(n^3)$-time algorithm of de~Berg et al. for \probOPTOpt{4}: an $O(n^{3-\epsilon})$-time algorithm for some $\epsilon>0$ would imply that {\sc All Pairs Shortest Paths} can be solved in time $O(n^{3-\delta})$ for some $\delta>0$. Note that although the family of $4$-moves contains all $3$-moves, it is still possible that there is no improving 3-move, but there is an improving 4-move. Thus the previous lower bound of de~Berg et al.\ does not imply our lower bound, though our reduction is essentially an extension of the one by de~Berg et al.~\cite{BergBJW16} with a few additional technical tricks. 

We also devote special attention to the $k=5$ case of \probKOPTOpt problem, hoping that it can still be of a practical interest.
Our generic algorithm works in $O(n^{3.67})$ time in this case.
However, we show that in this case the algorithm can be further refined, obtaining the $O(n^{3.4})$ running time.
We suppose that similar improvements of order $n^{\Omega(1)}$ should be also possible for larger values of $k$.
In Table~\ref{tab:comparision} we present the running times for $k=5,\ldots,10$.

\begin{table}
\begin{center}
\begin{tabular}{l|c|c|c|c|c|c}
$k$ & 5 & 6 & 7 & 8 & 9 & 10\\
\midrule
previous algorithm~\cite{BergBJW16}     & $O(n^{4})$ & $O(n^5)$ & $O(n^5)$ & $O(n^6)$ & $O(n^7)$ & $O(n^7)$\\
our algorithm          & $O(n^{3.4})$ & $O(n^4)$ & $O(n^{4.25})$ & $O(n^{4\tfrac{2}3})$ & $O(n^{5})$ & $O(n^{5.2})$
\end{tabular}
\end{center}
\caption{\label{tab:comparision}New running times for $k=5,\ldots,10$.}
\end{table}

\heading{Our Approach} 
Our algorithm applies dynamic programming on a tree decomposition. 
This is a standard method for dealing with some sparse graphs, like series-parallel graphs or outerplanar graphs.
However, in our case we work with complete graphs. 
The trick is to work on an implicit structure, called dependence graph $D$. 
Graph $D$ has $k$ vertices which correspond to the $k$ edges of $H$ that are chosen to be removed.
A subset of edges of $D$ corresponds to the pattern of edges to be added (as we will see the number of such patterns is bounded for every fixed $k$, and one can iterate over all patterns).
The dependence graph can be thought of as a sketch of the solution, which needs to be embedded in the input graph $G$.
Graph $D$ is designed so that if it has a separator $S$, such that $D-S$ falls apart into two parts $A$ and $B$, then once we find an optimal embedding of $A\cup S$ for some fixed embedding of $S$, one can forget about the embedding of $A$.
This intuition can be formalized as dynamic programming on a tree decomposition of $D$, which is basically a tree of separators in $D$.
The idea sketched above leads to an algorithm running in  time $O(n^{(1/3+\epsilon_k)k})$ for every fixed $k$, where $\lim\epsilon_k=0$.
The reason for the exponent in the running time is that $D$ is of maximum degree 4 and hence it has treewidth at most $(1/3+\epsilon_k)k$, as shown by Fomin et al.~\cite{fomin-algorithmica-2009}.

The further improvement to $O(n^{(1/4+\epsilon_k)k})$ is obtained by yet another idea.
We partition the $n$ edges of $H$ into $n^{1/4}$ buckets of size $n^{3/4}$ and we consider all possible distributions of the $k$ edges to remove into buckets.
If there are many nonempty buckets, then graph $D$ has fewer edges, because some dependencies are forced by putting the corresponding edges into different buckets.
As a result, the treewidth of $D$ decreases and the dynamic programming runs faster.
The case when there are few nonempty buckets does not give a large speed-up in the dynamic programming, but the number of such distributions is small.

\section{Preliminaries}

Throughout the paper let $w_1,w_2,\ldots,w_n$ and $e_1,\ldots,e_n$ be sequences of respectively subsequent vertices and edges visited by $H$, so that $e_i=\{w_i,w_{i+1}\}$ for $i=1,\ldots,n-1$ and $e_n=\{w_n,w_1\}$.
For $i=1,\ldots,n-1$ we call $w_i$ the {\em left endpoint} of $e_i$ and $w_{i+1}$ the {\em right endpoint} of $e_i$. Also, $w_n$ is the left endpoint of $e_n$ and $w_1$ is its right endpoint.

We work with undirected graphs in this paper. An edge between vertices $u$ and $v$ is denoted either as $\{u,v\}$ or shortly as $uv$.

For a positive integer $i$ we denote $[i]=\{1,\ldots,i\}$.

\subsection{Connection patterns and embeddings}

Formally, a $k$-move is a pair of sets $(E^-,E^+)$, both of cardinality $k$, where $E^-\subseteq \{e_1,\ldots,e_n\}$, $E^+\subseteq E(G)$, and $E(H)\setminus E^-\cup E^+$ is a Hamiltonian cycle.
This is the most intuitive definition of a $k$-move, however it has a drawback, namely it is impossible to specify $E^+$ without specifying $E^-$ first.
For this reason instead of listing the edges of $E^+$ explicitly, we will define a connection pattern, which together with $E^-$ expressed
as an {\em embedding} fully specifies a $k$-move.

A {\em $k$-embedding} (or shortly: {\em embedding}) is any function $f:[k]\rightarrow [n]$. 
A {\em connection $k$-pattern} (or shortly: {\em connection pattern}) is any perfect matching in the complete graph on the vertex set $[2k]$.
We call a connection pattern {\em valid} when one obtains a single $k$-cycle from $M$ by identifying vertex $2i$ with vertex $(2i+1)\bmod 2k$ for every $i=1,\ldots,k$.

Let us show that every pair $(E^-,E^+)$ that defines a $k$-move has a corresponding pair of an embedding and a connection pattern,
consequently giving an intuitive explanation of the above definition of embeddings and connection patterns.
Consider a move $Q=(E^-,E^+)$.
Let $E^-=\{e_{i_1},\ldots,e_{i_k}\}$, where $i_1<i_2<\cdots<i_k$.
For every $j=1,\ldots,k$, let $v_{2j-1}$ and $v_{2j}$ be the left and right endpoint of $e_{i_j}$, respectively.
An {\em embedding} of the $k$-move $Q$ is the function $f_Q:[k]\rightarrow [n]$ defined as $f_Q(j)=i_j$ for every $j=1,\ldots,k$.
Note that $f_Q$ is increasing.
A {\em connection pattern} of $Q$ is every perfect matching $M$ in the complete graph on the vertex set $[2k]$ such that $E^+=\{\{v_i,v_j\}\mid \{i,j\}\in M\}$.
Note that at least one such matching always exists, and if $E^-$ contains two incident edges then there is more than one such matching.
Note also that $M$ is valid, because otherwise after applying the $k$-move $Q$ we do not get a Hamiltonian cycle.

Conversely, consider a pair $(f, M)$, where $f$ is an increasing embedding and $M$ is a valid connection pattern.
We define $E^-_f = \{e_{f(j)}\mid j=1,\ldots,k\}$.
For every $j=1,\ldots,k$, let $v_{2j-1}$ and $v_{2j}$ be the left and right endpoint of $e_{f(j)}$, respectively.
Then we also define $E^+_{(f,M)} = \{v_iv_j\mid \{i,j\}\in M\}$.
It is easy to see that $(E^-_f,E^+_{(f,M)})$ is a $k$-move.

Because of the equivalence shown above, in what follows we abuse the notation slightly and a $k$-move $Q$ can be described both by a pair of edges to remove and add $(E^-_Q,E^+_Q)$ and by an embedding-connection pattern pair $(f_Q,M_Q)$. The {\em gain} of $Q$ is defined as $\gain(Q)=w(E^-_Q)-w(E^+_Q)$. 
Given a connection pattern $M$ and an embedding $f$, we can also define an $M$-gain of $f$, denoted by $\gain_M(f)=\gain(Q)$, where $Q$ is the $k$-move defined by $(f,M)$.
Note that \probKOPTOpt asks for a $k$-move with maximum gain.

We note that the notion of connection pattern of a $k$-move was essentially introduced by de~Berg et al.~\cite{BergBJW16} under the name of `signature', though they used a permutation instead of a matching, which we find more natural.
They also show that one can reduce the problem \probKOPTOpt so that it suffices to consider only $k$-moves where $E^-$ contains pairwise non-incident edges, but we do not find it helpful in the description of our algorithm (this assumption makes the connection pattern of a $k$-move unique).

\subsection{Tree decomposition and nice tree decomposition}

To make the paper self-contained, in this section we recall the definitions of tree and path decompositions and state their basic properties which will be used later in the paper. The content of this section comes from the textbook of Cygan et al.~\cite{fpttextbook}.
 
A {\em tree decomposition} of a graph $G$ is a pair $\mathcal{T}=(T,\{X_t\}_{t\in V(T)})$, where $T$ is a tree whose every node $t$ is assigned a vertex subset $X_t\subseteq V(G)$, called a bag,
such that the following three conditions hold:
\begin{description}
\item[(T1)] $\bigcup_{t\in V(T)} X_t =V(G)$.
\item[(T2)] For every $uv\in E(G)$, there exists a node $t$ of $T$ such that $u,v\in X_t$.
\item[(T3)] For every  $u\in V(G)$, the set $\{t\in V(T) \ |\  u\in X_t\}$ induces a connected subtree of $T$.
\end{description}
The {\em width} of tree decomposition $\mathcal{T}=(T,\{X_t\}_{t\in V(T)})$, denoted by $w(\Tt)$, equals $\max_{t\in V(T)} |X_t| - 1$. 
The {{\em treewidth}} of a graph $G$, denoted by $\tw(G)$, is the minimum possible width of a tree decomposition of $G$.
When $E$ is a set of edges and $V(E)$ the set of endpoints of all edges in $E$, by $\tw(E)$ we denote the treewidth of the graph $(V(E),E)$.

A {\em path decomposition} is a tree decomposition $\mathcal{T}=(T,\{X_t\}_{t\in V(T)})$, where $T$ is a path. 
Then $\Tt$ is more conveniently represented by a sequence of bags $(X_1,\ldots,X_{|V(T)|})$, corresponding to successive vertices of the path.
The {{\em pathwidth}} of a graph $G$, denoted by $\pw(G)$, is the minimum possible width of a path decomposition of $G$.

In what follows we frequently use the notion of {\em nice tree decomposition}, introduced by Kloks~\cite{kloks-book}. 
These tree decompositions are more structured, making it easier to describe dynamic programming over the decomposition.
A tree decomposition $\mathcal{T}=(T,\{X_t\}_{t\in V(T)})$ can be rooted by choosing a node $r\in V(T)$, called the root of $T$, which introduces a natural parent-child and ancestor-descendant relations in the tree $T$. A rooted tree decomposition $(T,\{X_t\}_{t\in V(T)})$ is {\em{nice}} if 
$X_r=\emptyset$, $X_\ell=\emptyset$ for every leaf $\ell$ of $T$, and every non-leaf node of $T$ is of one of the following three types:
\begin{itemize}
  \item{\bf{Introduce node}}: a node $t$ with exactly one child $t'$
    such that $X_t=X_{t'}\cup\{v\}$ for some vertex $v\notin X_{t'}$.
  \item{\bf{Forget node}}: a node $t$ with exactly one child $t'$ such
    that $X_t=X_{t'}\setminus \{w\}$ for some vertex $w\in X_{t'}$.
  \item{\bf{Join node}}: a node $t$ with two children $t_1,t_2$ such
    that $X_t=X_{t_1}=X_{t_2}$.
\end{itemize}

A path decomposition is nice when it is nice as tree decomposition after rooting the path in one of the endpoints. (Note that it does not contain join nodes.)

\begin{proposition}[see Lemma 7.4 in~\cite{fpttextbook}]
Given a tree (resp. path) decomposition $\mathcal{T}=(T,\{X_t\}_{t\in V(T)})$ of $G$ of width at most $k$, one can in time $O(k^2\cdot \max(|V(T)|,|V(G)|))$ compute a nice tree (resp. path) decomposition of $G$ of width at most $k$ that has at most $O(k|V(G)|)$ nodes.
\end{proposition}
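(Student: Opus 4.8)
The plan is to start from the given (tree or path) decomposition $\mathcal{T}$ and obtain the nice one by a bounded sequence of local surgeries, checking after each surgery that properties (T1)--(T3) survive; since the whole procedure is finite and rooted, the real content is the size and running-time accounting.

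\textbf{Step 1 (root and shrink).} First I would root $T$ at an arbitrary node and run a single post-order traversal that, at each node $t$, repeatedly absorbs every child $c$ with $X_c \subseteq X_t$ (delete $c$, reattach the children of $c$ to $t$, and re-examine the newly attached children). This terminates, and with sorted bags or $(k{+}1)$-bit vectors for the subset tests it costs $O(k\cdot|V(T)|)$ in total, because every absorption permanently removes a node. After the pass every non-root node $t$ satisfies $X_t\not\subseteq X_{p(t)}$, where $p(t)$ is the parent, so one may pick $v_t\in X_t\setminus X_{p(t)}$; property (T3) then forces $t$ to be the topmost node of the subtree induced by $\{s : v_t\in X_s\}$, which makes $t\mapsto v_t$ injective and yields $|V(T)|\le |V(G)|+1$.

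\textbf{Step 2 (make nice).} Next I would: (a) \emph{binarise} --- replace any node with $d\ge 3$ children by a small binary tree of copies of its bag, keeping the node count $O(|V(G)|)$ at cost $O(k|V(G)|)$; (b) \emph{normalise joins} --- above each child of a node with two children insert a copy of the parent's bag, so join nodes have identical bags on all three nodes; (c) along every remaining parent--child edge $tt'$, insert a chain that first forgets the elements of $X_{t'}\setminus X_t$ one at a time and then introduces the elements of $X_t\setminus X_{t'}$ one at a time, adding at most $2(k{+}1)$ nodes per edge; (d) prepend a forget-chain above the root down to an empty bag, and append below each leaf a forget-chain down to an empty bag. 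Each of (a)--(d) is a local modification that visibly preserves (T1)--(T3); for path decompositions (rooted at an endpoint) step (b) is vacuous and (a),(c),(d) keep the underlying tree a path, so the path case needs no separate treatment.

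\textbf{Step 3 (accounting).} After Step~1 there are $O(|V(G)|)$ nodes and $O(|V(G)|)$ edges; Steps~2(a)--(d) add $O(k)$ nodes per edge, giving a final nice decomposition with $O(k|V(G)|)$ nodes of the same width. Each inserted node is written in $O(k)$ time and there are $O(k|V(G)|)$ of them, so Step~2 costs $O(k^2|V(G)|)$, which together with the $O(k|V(T)|)$ spent reading the input and performing Step~1 is $O(k^2\cdot\max(|V(T)|,|V(G)|))$.

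The main obstacle I expect is Step~1: one has to argue both that a sufficiently aggressive contraction really drives the node count down to $O(|V(G)|)$ --- through the injectivity argument above, which itself leans on (T3) --- and that it can be realised by a single near-linear sweep rather than a naive quadratic loop. Everything in Step~2 is routine local surgery whose only delicate point is re-checking connectivity (T3) after each insertion and confirming that the path case stays a path.
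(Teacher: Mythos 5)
The paper does not prove this proposition; it cites it as Lemma~7.4 of the Cygan et al.\ textbook, and your reconstruction (shrink the decomposition to $O(|V(G)|)$ nodes via subset absorption and the topmost-node injection, then binarise joins and splice in forget/introduce chains, then account) is exactly the standard textbook argument, so on the level of ideas there is nothing to compare.

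There is, however, a genuine flaw in the running-time accounting of Step~1. The sentence ``it costs $O(k\cdot|V(T)|)$ in total, because every absorption permanently removes a node'' bounds the number of \emph{absorptions}, not the number of \emph{subset tests}: a surviving node $d$ is re-examined each time its current parent is absorbed, and nothing in your argument caps how often that happens. Concretely, let $G=K_{1,m}$ with centre $v_0$, let $T$ be the path $r,c_1,\ldots,c_m$ with every bag equal to $\{v_0\}$, and hang $m$ leaves $d_1,\ldots,d_m$ off $c_m$ with $X_{d_i}=\{v_0,v_i\}$. This is a valid width-$1$ tree decomposition with $|V(T)|=2m+1$, but your post-order absorption re-examines all $m$ of the $d_i$ at each of $c_m,c_{m-1},\ldots,c_1,r$ in turn, performing $\Theta(m^2)$ subset tests, far beyond the permitted $O(k^2\max(|V(T)|,|V(G)|))=O(m)$. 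The standard repair is to first contract, in one $O(k|V(T)|)$ pass, every maximal connected subtree of nodes whose bags are pairwise equal; after that any absorption strictly increases the bag size of the absorbing node, so each surviving node can be re-examined at most $k+1$ times along a chain of strictly growing bags, and Step~1 runs in $O(k^2|V(T)|)$, which is within budget. With this patch the rest of your accounting goes through.
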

 
We say that $(A,B)$ is a {\em{separation}} of a graph $G$ if $A\cup B = V(G)$ and there is no edge between $A\setminus B$ and $B\setminus A$. Then $A\cap B$ is a {\em{separator}} of this separation.
 
 \begin{lemma}[see Lemma 7.3 in~\cite{fpttextbook}]
\label{lem:tw-separate}
Let $(T,\{X_t\}_{t\in V(T)})$ be a tree decomposition of a graph $G$ and let $ab$ be an edge of $T$. The forest $T-ab$ obtained from $T$ by deleting edge $ab$ consists of two connected components $T_a$ (containing $a$) and $T_b$ (containing $b$). Let $A=\bigcup_{t\in V(T_a)} X_t$ and $B=\bigcup_{t\in V(T_b)} X_t$.
Then $(A,B)$ is a separation of $G$ with separator $X_a\cap X_b$.
 \end{lemma}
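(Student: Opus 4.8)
The plan is to check the two defining conditions of a separation directly from the axioms (T1)--(T3) and then to identify the separator. Set $A = \bigcup_{t \in V(T_a)} X_t$ and $B = \bigcup_{t \in V(T_b)} X_t$ as in the statement. That $A \cup B = V(G)$ is immediate from (T1): since $T_a$ and $T_b$ are the two connected components of $T - ab$, every node of $T$ lies in exactly one of them, so $A \cup B = \bigcup_{t \in V(T)} X_t = V(G)$.

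Next I would rule out edges between $A \setminus B$ and $B \setminus A$. Suppose $uv \in E(G)$ with $u \in A \setminus B$ and $v \in B \setminus A$. By (T2) there is a node $t$ with $u, v \in X_t$. If $t \in V(T_a)$ then $v \in X_t \subseteq A$, contradicting $v \notin A$; if $t \in V(T_b)$ then $u \in X_t \subseteq B$, contradicting $u \notin B$. Hence $(A,B)$ is a separation. It remains to show $A \cap B = X_a \cap X_b$. The inclusion $X_a \cap X_b \subseteq A \cap B$ is clear, because $a \in V(T_a)$ and $b \in V(T_b)$ give $X_a \subseteq A$ and $X_b \subseteq B$. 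For the reverse inclusion, take $v \in A \cap B$; then $v$ lies in a bag $X_{t_1}$ with $t_1 \in V(T_a)$ and in a bag $X_{t_2}$ with $t_2 \in V(T_b)$. The unique $t_1$--$t_2$ path in the tree $T$ must use the edge $ab$ --- otherwise it would stay within one component of $T - ab$ --- so it passes through both $a$ and $b$. By (T3) the set of nodes whose bag contains $v$ induces a connected subtree of $T$, which therefore contains this entire path, in particular $a$ and $b$; thus $v \in X_a \cap X_b$.

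Since this is a standard fact about tree decompositions, I do not expect any real difficulty. The only step that needs a little care is the last one: the claim that the $t_1$--$t_2$ path is forced through $ab$ uses precisely that $T_a$ and $T_b$ are the components of $T - ab$, and it is this, combined with the connectivity axiom (T3), that pins the separator down to $X_a \cap X_b$ rather than merely ensuring $X_a \cap X_b$ is contained in it.
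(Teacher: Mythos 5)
Your proof is correct and is essentially the standard argument; the paper itself does not prove this lemma but cites it from the textbook of Cygan et al., and your argument matches the textbook proof. All three steps — using (T1) for coverage, (T2) to rule out crossing edges, and (T3) together with the fact that any $t_1$--$t_2$ path between the two components must pass through the edge $ab$ to pin down the separator — are exactly the standard reasoning, and each is carried out correctly.
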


\section{The algorithm}

In this section we present our algorithms for \probKOPTOpt.
The brute-force algorithm verifies all possible $k$-moves.
In other words, it iterates over all possible valid connection patterns and increasing embeddings. 
The brilliant observation of Berg et al.~\cite{BergBJW16} is that we can iterate only over all possible connection patterns, whose number is bounded by $(2k)!$.
In other words, we fix a valid connection pattern $M$ and from now on, our goal is to find an increasing embedding $f:[k]\rightarrow [n]$ which, together with $M$, defines a $k$-move giving the largest weight improvement over all $k$-moves with connection pattern $M$.
Instead of doing this by enumerating all $\Theta(n^k)$ embeddings, Berg et al.~\cite{BergBJW16} fix carefully selected ${\lfloor 2/3 k \rfloor}$ values of $f$ in all $n^{\lfloor 2/3 k \rfloor}$ possible ways, and then show that the optimal choice of the remaining values can be found by a simple dynamic programming running in $O(nk)$ time.
Our idea is to find the optimal embedding for a given connection pattern using a different, more efficient approach. 

\subsection{Basic setup}
\label{sec:basic-setup}

Informally speaking, instead of guessing some values of $f$, we guess an {\em approximation} of $f$ defined by appropriate bucketing.
For each approximation $b$, finding an optimal embedding consistent with $b$ is done by a dynamic programming over a tree decomposition.
We would like to note that even without bucketing (i.e, by using a single trivial bucket of size $n$) our algorithm works in $n^{(1/3+\epsilon_k)k}$ time. Therefore the notion of bucketing is used to further improve the running time, but it is not essential to perform the dynamic programming on a tree decomposition.

More precisely, we partition the set $[n]$, corresponding to the edges of $H$, into buckets.
Each bucket is an interval $\{i, i+1,\ldots, j\}\subseteq [n]$, for some $1\le i \le j \le n$.
Let $n_b$ be the number of buckets and let $B_j$ denote the $j$-th bucket, for $j=1,\ldots,n_b$.
A {\em bucket assignment} is any nondecreasing function $b:[k]\rightarrow[n_b]$.

Unless explicitly modified, we use all buckets of the same size $\lceil n^\alpha\rceil$, for a constant $\alpha$ which we set later.
Then, for $j=1,\ldots,b$ the $j$-th bucket is the set $B_j = \{(j-1)\ceil{n^\alpha}+1,\ldots,j\ceil{n^\alpha}\}\cap[n]$.

Given a bucket assignment $b$ we define the set 
$$O_b=\{\{i,i+1\}\subset[k] \mid b(i)=b(i+1)\}.$$

\begin{definition}[$b$-monotone partial embedding]
Let $f:S\rightarrow [n]$ be a partial embedding for some $S\subseteq [k]$.
We say that $f$ is $b$-monotone when
\begin{itemize}
\item[(M1)] for every $i\in S$ we have $f(i)\in B_{b(i)}$, and 
\item[(M2)] for every $\{i,i+1\}\in O_b$, if $\{i,i+1\}\subseteq S$, then $f(i)<f(i+1)$.
\end{itemize}
\end{definition}

Note that a $b$-monotone embedding $f:[k]\rightarrow[n]$ is always increasing, but a $b$-monotone partial embedding does not even need to be non-decreasing (this seemingly artificial design simplifies some of our proofs).
In what follows, we present an efficient dynamic programming (DP) algorithm which, given a valid connection pattern $M$ and a bucket assignment $b$ finds a $b$-monotone embedding of maximum $M$-gain. 
To this end, we need to introduce the gain of a partial embedding.
Let $f:S\rightarrow[n]$ be a $b$-monotone partial embedding, for some $S\subseteq [k]$.
For every $j\in S$, let $v_{2j-1}$ and $v_{2j}$ be the left and right endpoint of $e_{f(j)}$, respectively.
We define
\[E^-_f = \{e_{f(i)} \mid i\in S\}\]
\[E^+_f = \{\{v_{i'},v_{j'}\} \mid i,j\in S, i'\in \{2i-1,2i\}, j'\in \{2j-1,2j\}, \{i',j'\}\in M\}.\]
Then, $\gain_M(f)=w(E^-_f)-w(E^+_f)$.

Note that $\gain_M(f)$ does not necessarily represent the {\em actual} cost gain of the choice of the edges to remove represented by $f$.
Indeed, assume that for some pair $i,j\in[k]$ there are $i'\in \{2i-1,2i\}$ and $j'\in \{2j-1,2j\}$ such that $\{i',j'\}\in M$.
Then we say that $i$ {\em interferes} with $j$, which means that we plan to add an edge between an endpoint of the $i$-th deleted edge and the $j$-th deleted edge.
Note that if $i\in S$ (the $i$-th edge is chosen) and $j\not\in S$ (the $j$-th edge is not chosen yet) this edge to be added is not known yet, and its cost is not represented in $\gain_M(f)$.
However, the value of $f(i)$ influences this cost.
Consider the following set of interfering pairs:
\[I_M = \{\{i,j\}\mid \text{$i$ interferes with $j$}\}.\]

Note that $I_M$ is obtained from $M$ by identifying vertex $2i-1$ with vertex $2i$ for every $i=1,\ldots,k$ (and the new vertex is simply called $i$).
In particular, this implies the following simple property of $I_M$.

\begin{proposition}
\label{prop:Ipi}
 Every connected component of the graph $([k], I_M)$ is a cycle or a single edge.
\end{proposition}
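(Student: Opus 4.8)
The plan is to read off the structure of $([k],I_M)$ directly from the fact, already noted in the text, that $I_M$ is a contraction of the perfect matching $M$. For each $i\in[k]$ write $P_i=\{2i-1,2i\}$, and let $G'$ be the \emph{multigraph} obtained from $M$ by identifying the two vertices of each $P_i$ into a single vertex $i$, this time \emph{keeping} loops and parallel edges. Then $I_M$ is exactly the simple graph underlying $G'$, and $G'$ has $k$ edges, one for each edge of $M$.

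First I would observe that $G'$ is $2$-regular, where a loop counts $2$ towards the degree of its vertex. Indeed, since $M$ is a perfect matching on $[2k]$, each of $2i-1$ and $2i$ lies in exactly one edge of $M$; after the identification these become two edge-endpoints at the vertex $i$, so $\deg_{G'}(i)=2$ for every $i$. A connected multigraph in which every vertex has degree exactly $2$ is a single cycle, where in the multigraph setting this can mean a loop (length $1$), a pair of parallel edges (length $2$), or a simple cycle of length at least $3$. Hence every connected component of $G'$ is one of these three kinds, and I would prove this $2$-regularity claim and the multigraph classification carefully, then transfer back to $I_M$.

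It remains to translate this to the simple graph $I_M$. A length-$2$ component of $G'$ (two parallel edges between $i$ and $j$, coming from $\{2i-1,2j-1\},\{2i,2j\}\in M$ or from $\{2i-1,2j\},\{2i,2j-1\}\in M$) becomes the single edge $\{i,j\}$ in $I_M$; a simple cycle of length at least $3$ in $G'$ has neither loops nor parallel edges, hence is unchanged and is a cycle in $I_M$; a loop at $i$ (from $\{2i-1,2i\}\in M$) disappears, leaving $i$ isolated. Such a loop would mean $M$ prescribes re-adding the very edge $e_{f(i)}$ that is being removed, so it does not occur for the connection pattern of a genuine $k$-move; in any case an isolated vertex is a degenerate (edgeless) component and causes no trouble later. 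Therefore every connected component of $([k],I_M)$ is a single edge or a cycle.

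The only mildly delicate point is the passage from $G'$ to $I_M$, in particular ruling out a path component of length $\ge 2$. This follows from $2$-regularity: simplification (deleting loops, merging parallel classes) can lower a vertex's degree only at a vertex carrying a loop or sitting in a parallel pair, and by $2$-regularity such a vertex already exhausts its two edge-slots there, so it lies in a length-$1$ or length-$2$ component of $G'$ — never inside a longer one. Hence the only degree-$1$ vertices of $I_M$ are the endpoints of the single-edge components, and no path of length $\ge 2$ can arise.
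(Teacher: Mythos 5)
Your proof is correct and follows essentially the same reasoning the paper relies on: the text already observes that $I_M$ arises from the perfect matching $M$ by identifying $2i-1$ with $2i$, and you just carry out the obvious consequence, namely that contracting a perfect matching yields a $2$-regular multigraph whose components are loops, bigons, or longer cycles, which simplify to isolated vertices, single edges, or cycles. Your side remark about the loop case ($\{2i-1,2i\}\in M$) is a genuine corner case that the paper glosses over --- it can in fact occur for a valid connection pattern and then produces an isolated vertex, which strictly speaking is neither a cycle nor a single edge --- but you correctly note it is harmless for the later use of the proposition.
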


\subsection{Dynamic programming over tree decomposition}
\label{sec:dp}

Now we define the graph $D_{M,b}$, called {\em the dependence graph}, where  $V(D_{M,b})=[k]$ and $E(D_{M,b})=O_b\cup I_M$.
The vertices of the graph correspond to the $k$ edges to be removed from $H$ (i.e., $j$ corresponds to the $j$-th deleted edge in the sequence $e_1,\ldots,e_n$).
The edges of $D_{M,b}$ correspond to dependencies between the edges to remove (equivalently, elements of the domain of an embedding).
The edges from $O_b$ are {\em order dependencies}: edge $\{i,i+1\}$ means that the $(i+1)$-th deleted edge should appear further on $H$ than the $i$-th deleted edge.
(Note that in $O_b$ there are no edges between the last element of a bucket and the first element of the next bucket --- this is because the corresponding constraint is forced by the assignment to buckets.)
The edges from $I_M$ are {\em cost dependencies} (resulting from interference explained in Section~\ref{sec:basic-setup}).

The goal of this section is a proof of the following theorem.

\begin{theorem}
\label{thm:dp}
 Let $M$ be a valid connection $k$-pattern and let $b:[k]\rightarrow[n]$ be a bucket assignment, where every bucket is of size $\lceil n^\alpha\rceil$.
 Then, a $b$-monotone embedding of maximum $M$-gain can be found in $O(n^{\alpha(\tw(D_{M,b})+1)}k^2+2^k)$ time.
\end{theorem}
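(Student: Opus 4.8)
The plan is to compute a nice tree decomposition of the dependence graph $D_{M,b}$ of width $\tw(D_{M,b})$ and run a standard dynamic programming over it, where a table entry at a node $t$ is indexed by a partial embedding of the bag $X_t$. First I would invoke the known algorithm of Bodlaender et al.\ (or any exact treewidth routine) to compute an optimal tree decomposition of $D_{M,b}$; since $D_{M,b}$ has only $k$ vertices, this costs $2^{O(k)}$ time, which is absorbed into the $2^k$ term (up to adjusting the hidden constant, or one can note $k=O(\log n)$ in the regime of interest and fold it into the polynomial — but cleaner to just keep an additive $2^{O(k)}$). Then apply the Proposition quoted from the textbook to turn it into a nice tree decomposition with $O(k\cdot k)=O(k^2)$ nodes of the same width, in $O(k^2\cdot k)$ time.

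The core is the DP. Root the nice decomposition at $r$ with $X_r=\emptyset$. For a node $t$, let $V_t$ be the union of bags in the subtree rooted at $t$. For each function $g\colon X_t\to[n]$ that is a $b$-monotone partial embedding on the restricted domain $X_t$ (i.e.\ respects (M1): $g(i)\in B_{b(i)}$, and (M2) for the order edges $O_b$ that lie inside $X_t$), store
\[
  c[t,g] \;=\; \max\bigl\{ \gain_M(f) \;:\; f\colon V_t\to[n],\ f|_{X_t}=g,\ f\text{ is $b$-monotone on domain }V_t \bigr\},
\]
where $\gain_M$ of a partial embedding is as defined in Section~\ref{sec:basic-setup}. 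The number of valid $g$ at a node is at most $(\lceil n^\alpha\rceil)^{|X_t|}\le n^{\alpha(\tw(D_{M,b})+1)}$ by (M1). The transitions are the usual four cases. \textbf{Leaf:} $X_t=\emptyset$, so $c[t,\emptyset]=0$. \textbf{Introduce node} $t$ with child $t'$ and $X_t=X_{t'}\cup\{v\}$: for each valid $g$ on $X_t$, set $c[t,g]=c[t',g|_{X_{t'}}]+\Delta$, where $\Delta$ accounts exactly for the terms of $\gain_M$ that become determined once $v$ is placed — namely $w(e_{g(v)})$ from $E^-$, minus the weights of all edges of $E^+$ both of whose two (identified) endpoints correspond to vertices already in $X_t$; by property (T2) of tree decompositions every edge of $I_M$ (and of $O_b$) appears in some bag, so each such cost is added exactly once over the whole DP when its last endpoint is introduced, and validity of $g$ on $X_t$ enforces (M2) for the newly completed order edges incident to $v$. \textbf{Forget node} $t$ with child $t'$ and $X_t=X_{t'}\setminus\{w\}$: $c[t,g]=\max\{c[t',g']\mid g'|_{X_t}=g,\ g'\text{ valid}\}$. \textbf{Join node} $t$ with children $t_1,t_2$, $X_t=X_{t_1}=X_{t_2}$: $c[t,g]=c[t_1,g]+c[t_2,g]-\mathrm{contrib}(g)$, where $\mathrm{contrib}(g)$ is the part of $\gain_M(g)$ supported on $X_t$ alone, subtracted once to cancel double counting (here I use that, by Lemma~\ref{lem:tw-separate}, $V_{t_1}$ and $V_{t_2}$ overlap exactly in $X_t$, so no edge of $D_{M,b}$ joins $V_{t_1}\setminus X_t$ to $V_{t_2}\setminus X_t$, hence the only overlap in the gain contributions is on $X_t$). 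The answer is $c[r,\emptyset]$ over all choices, and a witnessing embedding is recovered by standard backpointer bookkeeping; that this final $f$ is a genuine $b$-monotone embedding of $[k]$ follows since $X_r=\emptyset$ forces $V_r=[k]$ and (M1),(M2) were maintained.

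For the running time: there are $O(k^2)$ nodes; at each node we enumerate $O(n^{\alpha(\tw(D_{M,b})+1)})$ partial embeddings, and each introduce/forget/join transition costs $n^{\alpha(\tw(D_{M,b})+1)}\cdot \mathrm{poly}(k)$ (at a join we match up equal-on-$X_t$ entries by bucketing/sorting on the key; at a forget we take a max over the at most $\lceil n^\alpha\rceil$ extensions of $w$). Summing gives $O(n^{\alpha(\tw(D_{M,b})+1)}k^2)$, and adding the $2^{O(k)}$ for computing the decomposition yields the claimed bound $O(n^{\alpha(\tw(D_{M,b})+1)}k^2+2^k)$.

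\textbf{Main obstacle.} The delicate point is the accounting of $\gain_M$: I must argue that each edge of $E^-_f$ and each edge of $E^+_f$ contributes its weight \emph{exactly once} across the whole recursion — never zero times, never twice. This reduces to: (i) every vertex of $[k]$ is introduced exactly once on each root-to-leaf branch it lies on, and each edge of $D_{M,b}$ has its ``last endpoint'' introduced at a well-defined node (property (T2) plus (T3)); (ii) at join nodes the overlap of the two subtrees' gain is precisely the part supported on the shared bag $X_t$, which is where Lemma~\ref{lem:tw-separate} is essential. A secondary subtlety is that (M2) is a constraint only on order edges $\{i,i+1\}\in O_b$, all of which lie inside a single bucket and hence, by (T2), inside some bag — so enforcing ``valid $g$'' locally at every node does enforce (M2) globally; meanwhile cross-bucket monotonicity is automatic from (M1) since distinct buckets are disjoint intervals in the correct order. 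I would state and prove the single-counting claim as a short lemma before assembling the transitions.
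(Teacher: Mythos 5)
Your proposal is correct and follows essentially the same route as the paper: compute a nice tree decomposition of $D_{M,b}$, index the DP table at each node by a $b$-monotone partial embedding of the bag, and use the standard introduce/forget/join transitions, with condition (M2) enforced locally and propagated via the separator property (Lemma~\ref{lem:tw-separate}). One discrepancy worth flagging: at a join node you correctly \emph{subtract} the gain supported on the shared bag (by inclusion--exclusion over $V_{t_1}\cap V_{t_2}=X_t$ one gets $\gain_M(g)=\gain_M(g_1)+\gain_M(g_2)-\gain_M(f)$), whereas the paper's displayed formula~\eqref{eq:join} writes this corrective term with a $+$ sign; that appears to be a typographical slip and has no bearing on the running-time bound or the theorem statement.
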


Let $\Tt=(T,\{X_t\}_{t\in V(T)})$ be a nice tree decomposition of $D_{M,b}$ with minimum width.
Such a decomposition can be found in $O^*(1.7347^k)$ time by an algorithm of Fomin and Villanger~\cite{FV2010}, though for practical purposes a simpler $O^*(2^k)$-time algorithm is advised by Bodlaender et al.~\cite{BodlaenderFKKT12}.
For every $t\in V(T)$ we denote by $V_t$ the union of all the bags in the subtree of $T$ rooted in $t$.

For every node $t\in V(T)$, and for every $b$-monotone function $f:X_t\rightarrow[n]$, we will compute the following value.
\[T_t[f] = \max_{\substack{g:V_t \rightarrow [n]\\g|_{X_t}=f\\\text{$g$ is $b$-monotone}}}\gain_M(g).\]

Then, if $r$ is the root of $T$, and $\emptyset$ denotes the unique partial embedding with empty domain, then $T_r[\emptyset]$ is the required maximum $M$-gain of a $b$-monotone embedding.
The embedding itself (and hence the corresponding $k$-move) can be also found by using standard DP techniques.
The values of $T_t[f]$ are computed in a bottom-up fashion.
Let us now present the formulas for computing these values, depending on the kind of node in the tree $T$.

\heading{Leaf node}
When $t$ is a leaf of $T$, we know that $X_t=V_t=\emptyset$, and we just put $T_t[\emptyset]=0$.

\heading{Introduce node}
Assume $X_t = X_{t'}\cup\{i\}$, for some $i\not\in X_{t'}$ where node $t'$ is the only child of $t$.
Denote $\Delta E^+_f = E^+_f\setminus E^+_{f|_{X_{t'}}}$.
Then, we claim that for every $b$-monotone function $f:X_t\rightarrow[n]$, 
\begin{equation}
\label{eq:intro}
T_t[f] = T_{t'}[f|_{X_{t'}}] + w(e_{f(i)}) - \sum_{\{u,v\}\in \Delta E^+_f}w(\{u,v\}). 
\end{equation}

We show that~\eqref{eq:intro} holds by showing the two relevant inequalities.
Let $g$ be a function for which the maximum from the definition of $T_t[f]$ is attained.
Let $g'=g|_{V_{t'}}$. Note that $g'$ is $b$-monotone because $g$ is $b$-monotone.
Hence, $\gain_M(g') \le T_{t'}[f|_{X_{t'}}]$.
It follows that $T_t[f] = \gain_M(g)=\gain_M(g') + w(e_{f(i)}) - \sum_{\{u,v\}\in \Delta E^+_f}w(\{u,v\}) \le T_{t'}[f|_{X_{t'}}] + w(e_{f(i)}) - \sum_{\{u,v\}\in \Delta E^+_f}w(\{u,v\})$. 

Now we proceed to the other inequality.
Assume $g'$ is a function for which the maximum from the definition of $T_{t'}[f|_{X_{t'}}]$ is attained.
Let $g:V_t\rightarrow [n]$ be the function such that $g|_{V_{t'}}=g'$ and $g(i)=f(i)$.
Let us show that $g$ is $b$-monotone.
The condition $(M1)$ is immediate, since $g'$ and $f$ are $b$-monotone.
For $(M2)$, consider any $\{j,j+1\}\in O_b$ such that $\{j,j+1\}\subseteq V_t$.
If $i\not\in \{j,j+1\}$ then $g(j)<g(j+1)$ by $b$-monotonicity of $g'$, so assume $i\in \{j,j+1\}$.
Then $\{j,j+1\}\subseteq X_t$, for otherwise $X_t\cap X_{t'}$ does not separate $j$ from $j+1$, a contradiction with Lemma~\ref{lem:tw-separate}.
For $\{j,j+1\}\subseteq X_t$, we have $g(j)<g(j+1)$ since $f(j)<f(j+1)$.
Hence $g$ is $b$-monotone, which implies $T_t[f]\ge \gain_M(g)$.
Then it suffices to observe that $\gain_M(g)=\gain_M(g') + w(e_{f(i)}) - \sum_{\{u,v\}\in \Delta E^+_f}w(\{u,v\})=T_{t'}[f|_{X_{t'}}] + w(e_{f(i)}) - \sum_{\{u,v\}\in \Delta E^+_f}w(\{u,v\})$.
This finishes the proof that~\eqref{eq:intro} holds.

\heading{Forget node}
Assume $X_t = X_{t'}\setminus\{i\}$, for some $i\in X_{t'}$ where node $t'$ is the only child of $t$.
Then the definition of $T_t[f]$ implies that
\begin{equation}
\label{eq:forget}
T_t[f] = \max_{\substack{f':X_{t'} \rightarrow [n]\\f'|_{X_t}=f\\\text{$f'$ is $b$-monotone}}} T_{t'}[f']. 
\end{equation}

\heading{Join node}
Assume $X_t = X_{t_1} = X_{t_2}$, for some nodes $t$, $t_1$ and $t_2$, where $t_1$ and $t_2$ are the only children of $t$.

Then, we claim that for every $b$-monotone function $f:X_t\rightarrow[n]$, 
\begin{equation}
\label{eq:join}
T_t[f] = T_{t_1}[f] + T_{t_2}[f] + \left(w(E^-_f) - w(E^+_f)\right). 
\end{equation}

Let us first show the $\le$ inequality.
Let $g$ be a function for which the maximum from the definition of $T_t[f]$ is attained.
Let $g_1=g|_{V_{t_1}}$ and $g_2=g|_{V_{t_2}}$. Note that $g_1$ and $g_2$ are $b$-monotone because $g$ is $b$-monotone.
This, together with the fact that $g_i|_{X_{t_i}}=f$ for $i=1,2$ implies $\gain_M(g_i) \le T_{t_i}[f]$ for $i=1,2$.
It follows that $T_t[f] = \gain_M(g)=\gain_M(g_1) + \gain_M(g_2) + \left(w(E^-_f) - w(E^+_f)\right) \le T_{t_1}[f] + T_{t_2}[f] + \left(w(E^-_f) - w(E^+_f)\right)$. 

Now we proceed to the $\ge$ inequality.
Assume $g_1$ (resp. $g_2$) is a function for which the maximum from the definition of $T_{t_1}[f]$ (resp. $T_{t_2}[f]$) is attained.
Let $g:V_t\rightarrow [n]$ be the function such that $g|_{V_{t_1}}=g_1$ and $g|_{V_{t_2}}=g_2$.
Note that $g|_{X_t}=f$.
Then $\gain_M(g)=\gain_M(g_1) + \gain_M(g_2) + \left(w(E^-_f) - w(E^+_f)\right) = T_{t_1}[f] + T_{t_2}[f] + \left(w(E^-_f) - w(E^+_f)\right)$.
It suffices to show that $g$ is $b$-monotone, because then $T_t[f] \ge \gain_M(g)$.
The condition $(M1)$ is immediate, since $g_1$ and $g_2$ are $b$-monotone.
For $(M2)$, consider any $\{j,j+1\}\in O_b$ such that $\{j,j+1\}\subseteq V_t$.
If $\{j,j+1\}\subseteq V_{t_1}$ or $\{j,j+1\}\subseteq V_{t_2}$ then $g(j)<g(j+1)$ by $b$-monotonicity of $g_1$ or $g_2$, respectively.
Hence, by symmetry, we can assume $j\in V_{t_1}\setminus V_{t_2}$ and $j+1\in V_{t_2}\setminus V_{t_1}$.
However, this cannot happen, because then $X_t$ does not separate $j$ from $j+1$, a contradiction with Lemma~\ref{lem:tw-separate}.

\heading{Running time}
Since $|V(T)| = O(k)$, in order to complete the proof of Theorem~\ref{thm:dp} it suffices to prove the following lemma.
\begin{lemma}
\label{lem:node-process-time}
 Let $\mathcal{T}=(T,\{X_t\}_{t\in V(T)})$ be a nice tree decomposition of $D$.
 Let $t$ be a node of $T$.
 For every $i\in X_t$ let $s_i$ be the size of the bucket assigned to $i$.
 Then, all the values of $T_t$ can be found in time $O(k\prod_{i\in X_t}s_i)$.
 In particular, if all buckets are of size $\ceil{n^\alpha}$, then $t$ can be processed in time $O(kn^{\alpha|X_t|})$.
\end{lemma}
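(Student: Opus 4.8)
The plan is to bound the number of entries of $T_t$ and to show that each entry is produced with only $O(k)$ work; the product of the two is the claimed bound. For the count, condition (M1) alone forces $f(i)\in B_{b(i)}$, a set of exactly $s_i$ elements, so there are at most $\prod_{i\in X_t}s_i$ functions $f\colon X_t\to[n]$ satisfying (M1), and hence at most that many $b$-monotone ones. I would store $T_t$ as a $|X_t|$-dimensional array whose axis for a coordinate $i$ is indexed by the position of $f(i)$ inside $B_{b(i)}$; with this layout the address of an $f$, and of any restriction $f|_{X_{t'}}$, is computed from $f$ (represented as an array) in $O(k)$ time, and checking (M2) for a given $f$ --- scanning the $\le k-1$ pairs $\{j,j+1\}\in O_b$ contained in $X_t$ --- is $O(k)$ as well. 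I would also precompute once, in $O(k)$ time, the matching $M$ as a mate function on $[2k]$, so that for each coordinate $i$ the $M$-mates of $2i-1$ and $2i$ (hence the edges of $I_M$) are available in $O(1)$. Granting that any single edge weight is retrievable in $O(1)$, it then remains to process the four node types with $O(k)$ work per entry.

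A \emph{leaf} is immediate: $X_t=\emptyset$ and $T_t[\emptyset]=0$. For an \emph{introduce} node with $X_t=X_{t'}\cup\{i\}$, I would iterate over the $\le\prod_{j\in X_t}s_j$ functions meeting (M1), discard those failing (M2), and evaluate the right-hand side of~\eqref{eq:intro} on the rest. The one point requiring argument is that $\Delta E^+_f=E^+_f\setminus E^+_{f|_{X_{t'}}}$ is cheap to form: it contains only the edges of $E^+_f$ incident to an endpoint of $e_{f(i)}$, hence at most the two edges given by the $M$-mates of $2i-1$ and $2i$ (retained only when the mate's coordinate also lies in $X_t$), since $M$ is a matching; as $f(i)$ fixes the endpoints of $e_{f(i)}$ and $f$ fixes the other endpoints, $\sum_{\{u,v\}\in\Delta E^+_f}w(\{u,v\})$ is $O(1)$, so each entry costs $O(k)$ (dominated by the restriction lookup and the (M2) check). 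For a \emph{join} node $X_t=X_{t_1}=X_{t_2}$ the three tables share an index set, so for each monotone $f$ the lookups $T_{t_1}[f]$ and $T_{t_2}[f]$ cost $O(k)$, and in the correction $w(E^-_f)-w(E^+_f)$ of~\eqref{eq:join} both $E^-_f$ and $E^+_f$ have at most $|X_t|\le k$ edges (the latter is in bijection with the set of $M$-edges inside $\{2j-1,2j:j\in X_t\}$), so this term is $O(k)$ too. Thus leaf, introduce and join nodes are each processed in $O(k\prod_{i\in X_t}s_i)$.

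The \emph{forget} node $X_t=X_{t'}\setminus\{i\}$ is the step I would be most careful about. Formula~\eqref{eq:forget} maximises, for each $f$, over the $b$-monotone extensions $f'$ of $f$ to $X_{t'}$, i.e. over the $\le s_i$ admissible values of $f'(i)\in B_{b(i)}$; cleanest is to initialise $T_t$ to $-\infty$ and sweep once through the table of $t'$, relaxing $T_t[f'|_{X_t}]\leftarrow\max\{T_t[f'|_{X_t}],\,T_{t'}[f']\}$ for every stored $f'$. This touches $\le\prod_{j\in X_{t'}}s_j$ entries with $O(k)$ work apiece, so a forget node costs $O(k\prod_{j\in X_{t'}}s_j)$: here the estimate is governed by the child's bag $X_{t'}$ rather than $X_t$ itself, and since $|X_{t'}|=|X_t|+1\le\tw(D)+1$ while all buckets have size $\lceil n^\alpha\rceil$, this equals $O(k\,n^{\alpha(\tw(D)+1)})$ --- exactly the budget that the node-by-node summation in Theorem~\ref{thm:dp} needs. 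Substituting $s_i=\lceil n^\alpha\rceil$ in the per-node estimates yields the ``in particular'' statement. The real work in this proof is not any single deep step but the bookkeeping: organising the tables so restrictions and lookups are $O(k)$, and verifying that the incremental terms $\Delta E^+_f$ at introduce nodes and $w(E^-_f)-w(E^+_f)$ at join nodes stay $O(k)$ --- both hinging on $M$ being a matching --- while keeping the forget node's sweep within the stated time.
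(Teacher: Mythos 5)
Your proof is correct and follows the same approach as the paper: bound the number of table entries by $\prod_{i\in X_t}s_i$ via (M1), and argue that each entry is computed in $O(k)$ time, using $|\Delta E^+_f|\le 2$ (since $M$ is a matching) at introduce nodes and $|E^-_f|,|E^+_f|\le k$ at join nodes. You also correctly flag the one subtle point — at a forget node the work is governed by $\prod_{i\in X_{t'}}s_i$ with $X_{t'}$ the child's (larger) bag, so the lemma's stated bound should really be read in terms of the child bag there; the paper's own proof states the forget bound in exactly this form, and as you observe it is absorbed harmlessly into Theorem~\ref{thm:dp} because $|X_{t'}|\le\tw(D_{M,b})+1$.
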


\begin{proof}
Obviously, in every leaf node the algorithm uses only $O(1)$ time.

For an introduce node, observe that evaluation of the formula~\eqref{eq:intro} takes $O(k)$ time for every $f$, since $|\Delta E^+_f|\le 2$
(the factor $O(k)$ is needed to read off a single value from the table). 
By condition $(M1)$, each value $f(i)$ of a $b$-monotone function $f$ can be fixed in $s_i$ ways, so the number of $b$-monotone functions $f:X_t\rightarrow[n]$ is bounded by $\prod_{i\in X_t}s_i$.
Hence all the values of $T_t$ are computed in time $O(k\prod_{i\in X_t}s_i)$, which is $O(kn^{\alpha|X_t|})$ when all buckets are of size $\ceil{n^\alpha}$.

For a forget node, a direct evaluation of~\eqref{eq:forget} for all $b$-monotone functions $f:X_t\rightarrow[n]$ takes $O(k\prod_{i\in X_{t'}}s_i)$ time, where $t'$ is the only child of $t$.

Finally, for a join node a direct evaluation of~\eqref{eq:join} takes $O(k)$ time, since $|E^-_f|\le k$ and $|E^+_f|\le k$. Hence all the values of $T_t$ are computed in time $O(k\prod_{i\in X_t}s_i)$.
\end{proof}

\subsection{An algorithm running in time $O(n^{(1/3+\epsilon)k})$ for $k$ large enough}

We will make use of the following theorem due to Fomin, Gaspers, Saurabh, and Stepanov~\cite{fomin-algorithmica-2009}.

\begin{theorem}[Fomin et al.~\cite{fomin-algorithmica-2009}]
\label{thm:bound-tw-fomin}
For any $\epsilon > 0$, there exists an integer $n_\epsilon$ such that for every graph $G$ with $n>n_\epsilon$ vertices,
\[\pw(G) \le \frac{1}6 n_3 + \frac{1}3 n_4 + \frac{13}{30} n_5 + \frac {23}{45}n_6 + n_{\ge 7} + \epsilon n,\]
where $n_i$ is the number of vertices of degree $i$ in $G$ for any $i \in \{3,\ldots, 6\}$ and $n_{\ge 7}$
is the number of vertices of degree at least 7. 
\end{theorem}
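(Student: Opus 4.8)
The plan is to derive the bound from bisection-width estimates for graphs of bounded maximum degree, in the style of Monien and Preis, combined with a recursive construction of a path decomposition, as Fomin and H\o ie did for cubic graphs.

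First I would carry out a chain of standard reductions. We may assume $G$ is connected (splitting into components is routine, since the right-hand side is additive over components, and components that are paths or cycles have pathwidth at most $2$). Letting $X$ be the set of vertices of degree at least $7$ and inserting $X$ into every bag of a path decomposition of $G-X$ gives $\pw(G)\le\pw(G-X)+n_{\ge 7}$; since deleting vertices only decreases degrees and the coefficients $\tfrac16<\tfrac13<\tfrac{13}{30}<\tfrac{23}{45}$ are increasing, the weighted sum does not grow, so we may assume $\Delta(G)\le 6$. Suppressing every vertex of degree $\le 2$ (a maximal induced path of such vertices collapses to one, possibly multiple, edge) changes the pathwidth by $O(1)$ and leaves the weighted sum unchanged, so we may also assume minimum degree $3$. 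Already at this point weaker coefficients come essentially for free: replacing each degree-$d$ vertex by a path of $d-2$ vertices of degree $3$ produces a cubic graph $G'$ of which $G$ is a minor, with $|V(G')|=n_3+2n_4+3n_5+4n_6$, so the cubic bound below yields $\pw(G)\le\tfrac16 n_3+\tfrac13 n_4+\tfrac12 n_5+\tfrac23 n_6+o(n)$ --- qualitatively what we want, but with the wrong constants at degrees $5$ and $6$.

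The core ingredient is a weighted balanced-separator statement: for every $\delta>0$ there is an $m_\delta$ such that every connected graph $H$ on $m>m_\delta$ vertices with $\Delta(H)\le 6$ contains a set $S\subseteq V(H)$ with $|S|\le\tfrac16 m_3+\tfrac13 m_4+\tfrac{13}{30}m_5+\tfrac{23}{45}m_6+\delta m$ (degrees counted in $H$) such that every connected component of $H-S$ has at most $\lceil m/2\rceil$ vertices. In the pure regular cases this is the vertex-separator reformulation of the Monien--Preis bisection-width bounds $m/6$ and $m/3$ for cubic and $4$-regular graphs, applied after adding $o(m)$ dummy edges to regularise $H$. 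The improved constants $\tfrac{13}{30}$ and $\tfrac{23}{45}$ for degrees $5$ and $6$ --- beating the $\tfrac12,\tfrac23$ that naive splitting would give --- are where the difficulty sits, and I expect this to be the main obstacle: they should come either from re-running the Monien--Preis flow/local-switching argument with a degree-dependent charging scheme (using that the gadgets one would otherwise insert to reduce the degree are essentially free) or from a separate argument tailored to maximum degree $5$ and $6$, after which the mixed-degree estimate follows by an amortised combination of the per-degree contributions. The cubic bound used above is the case $m_4=m_5=m_6=0$.

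Given the separator lemma, I would build the path decomposition recursively: apply it to $G$ to get $S$ and the components $C_1,\dots,C_r$ of $G-S$, recurse on each $G[C_i]$, add $S$ to every bag of the resulting decomposition, and concatenate the pieces; since $S$ lies in the first and last bag of each piece, this is a valid path decomposition (the standard separator-to-path-decomposition passage, compatible with Lemma~\ref{lem:tw-separate}). Executed literally this only gives $p(m)\le|S|+p(\lceil m/2\rceil)$ and loses a constant factor; eliminating that loss is a known but delicate point --- intuitively a path decomposition is itself a chain of separators, so one never needs to carry a separator deeper into the recursion than the point where its edges are still live, and summing over the $O(\log n)$ levels the bag sizes telescope back to the single-separator bound up to an $o(n)$ error --- and I would import it rather than reprove it. Finally, choosing $n_\epsilon$ large enough to swallow that error together with the $o(\cdot)$ losses from the reductions and from the regularisation at each level completes the proof. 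In summary, the reductions and the recursive assembly are routine (the latter modulo the known factor-saving trick), and essentially all the genuine work lies in establishing the weighted separator lemma with the precise constants $\tfrac{13}{30}$ and $\tfrac{23}{45}$ for degrees $5$ and $6$.
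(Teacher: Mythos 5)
This statement is imported as a citation to Fomin, Gaspers, Saurabh and Stepanov~\cite{fomin-algorithmica-2009}; the paper you are reading does not prove it, so there is no in-paper proof to compare against. Judged on its own terms, your sketch is broadly sensible for the cubic part --- the reduction to connected graphs with minimum degree $3$ and maximum degree $\le 6$, and the use of the Monien--Preis bisection bound via the Fomin--H{\o}ie recursion to get the $\tfrac16 n_3$ term, are indeed the way that part goes --- and the recursion-loss issue you flag and import rather than reprove is a real subtlety that Fomin--H{\o}ie handle by tracking degree decrease across the cut.

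The genuine gap, however, is exactly where you place your bet: your ``weighted balanced-separator lemma'' with coefficients $\tfrac{13}{30}$ and $\tfrac{23}{45}$ for degrees $5$ and $6$, which you propose to obtain by re-running a Monien--Preis-style bisection argument for higher degrees. Such bisection bounds are not available: Monien and Preis establish their flow/local-switching bounds for $3$- and $4$-regular graphs only, and even in the $4$-regular case their bisection constant does not directly match the $\tfrac13$ coefficient in the statement, which already signals that the higher-degree coefficients are not bisection-width constants. The actual route in~\cite{fomin-algorithmica-2009} is different in structure: the Monien--Preis/Fomin--H{\o}ie machinery is invoked only for the max-degree-$3$ base case, and vertices of degree $4$, $5$, $6$ are handled by a one-shot deletion scheme that places them into every bag, with the coefficients $\tfrac13$, $\tfrac{13}{30}$, $\tfrac{23}{45}$ emerging from a degree-sensitive potential function and an inductive case analysis on how deleting a high-degree vertex lowers its neighbours' degrees, not from a family of separator theorems. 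Your replacement-by-cubic-gadgets fallback correctly gives only $\tfrac12$ and $\tfrac23$, and the optimisation over how to split a degree-$d$ vertex into degree-$3$ and degree-$4$ pieces cannot beat those numbers, so the improvement really does require the amortised deletion argument rather than a separator lemma; as it stands, the step you marked as the main obstacle is a missing ingredient rather than an imported one.
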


We actually use the following corollary, which is rather immediate.

\begin{corollary}
\label{thm:bound-tw}
For any $\epsilon > 0$, there exists an integer $n_\epsilon$ such that for every multigraph $G$ with $n>n_\epsilon$ vertices and $m$ edges where for every vertex $v\in V(G)$ we have $2\le \deg_G(v)\le 4$, the pathwidth of $G$ is at most $(m - n)/3 + \epsilon n$. 
\end{corollary}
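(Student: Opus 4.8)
The plan is to derive the bound from Theorem \ref{thm:bound-tw-fomin} by a direct counting argument on the degree sequence. Since we are dealing with a multigraph, the first step is to reduce to the simple-graph setting required by Theorem \ref{thm:bound-tw-fomin}: I would note that subdividing every edge once yields a simple graph $G'$ whose pathwidth differs from that of $G$ by at most a constant (subdivision does not change pathwidth by more than $1$, or one can argue more carefully that it does not increase it at all for $\pw\ge 1$), and whose vertex count is $n+m$, which is still $O(n)$ since $m\le 2n$ by the degree bound; the $\epsilon n$ slack absorbs the change. Alternatively, and more cleanly, I would just observe that the proof of Theorem \ref{thm:bound-tw-fomin} in \cite{fomin-algorithmica-2009} goes through verbatim for multigraphs, or cite that the statement there already applies; then no reduction is needed. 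I would pick whichever is least fussy and state it in one sentence.

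The main step is the arithmetic. Write $n_i$ for the number of vertices of degree exactly $i$, so $n = n_2+n_3+n_4$ (only degrees $2,3,4$ occur) and $2m = \sum_i i\, n_i = 2n_2 + 3n_3 + 4n_4$, hence $2m - 2n = n_3 + 2n_4$, i.e. $n_3 + 2n_4 = 2(m-n)$. Theorem \ref{thm:bound-tw-fomin} gives $\pw(G) \le \tfrac16 n_3 + \tfrac13 n_4 + \epsilon n$ (the higher-degree terms vanish). The goal is therefore to show
\[
\tfrac16 n_3 + \tfrac13 n_4 \;\le\; \tfrac13 (m-n) \;=\; \tfrac16\bigl(n_3 + 2n_4\bigr),
\]
which holds with equality. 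So in fact $\tfrac16 n_3 + \tfrac13 n_4 = \tfrac13(m-n)$ exactly, and the corollary follows immediately with the same $\epsilon n$ error term and the same threshold $n_\epsilon$.

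There is essentially no obstacle here; the only thing to be careful about is the multigraph-versus-simple-graph discrepancy and making sure the handshake identity $\sum_v \deg_G(v) = 2m$ is applied with multiplicities counted correctly (loops, if permitted, contribute $2$ to a degree — but in our application $D_{M,b}$ has no loops, and one can simply assume loopless multigraphs, or note a loop can be removed harmlessly). I would also remark that the same computation shows the bound is tight when all degrees are $3$ or $4$, which is exactly the regime of interest for the dependence graph, so no slack is being wasted beyond the unavoidable $\epsilon n$ coming from Theorem \ref{thm:bound-tw-fomin}.
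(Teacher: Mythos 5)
Your proposal is correct and follows essentially the same route as the paper's proof: both derive the bound from Theorem~\ref{thm:bound-tw-fomin} via the handshake identity, observing that $\tfrac16 n_3 + \tfrac13 n_4 = \tfrac13(m-n)$ holds with equality under the degree hypothesis. The one thing you do beyond the paper is flag the multigraph-versus-simple-graph mismatch and sketch how to bridge it; the paper silently invokes the theorem on a multigraph, so your explicit remark (e.g.\ via subdivision) is a reasonable bit of extra care rather than a different argument.
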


\begin{proof}
The corollary follows from Theorem~\ref{thm:bound-tw-fomin} by the following chain of equalities.
  \begin{equation}
  \begin{split} 
    \frac{1}6 n_3 + \frac{1}3 n_4 & = \frac{1}3\left(\frac{1}2n_3 + n_4\right) = \frac{1}3\left(\frac{1}2(2n_2+3n_3+4n_4) - (n_2 + n_3 + n_4)\right) \\
                                  & = \frac{1}3\left(\frac{1}2\sum_{v\in V(G)}\deg_G(v) - n\right) = \frac{1}3(m-n).
  \end{split}
  \end{equation}
\end{proof}

Let $P_k = \{\{i,i+1\}\mid i\in[k-1]\}$.

\begin{lemma}
\label{lem:pw(D)}
 For any $A\subseteq P_k$ we have $\pw(I_M \cup A) \le |A|/3 + \epsilon_k k$, where $\lim_{k\rightarrow\infty} \epsilon_k = 0$.
\end{lemma}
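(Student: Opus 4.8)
The plan is to apply Corollary~\ref{thm:bound-tw} to a suitable modification of the graph $(V(I_M\cup A), I_M\cup A)$, because that corollary gives exactly the bound $(m-n)/3 + \epsilon n$ we want, provided all degrees lie between $2$ and $4$. So the first task is to understand the degrees in the multigraph $G_0 := ([k], I_M\cup A)$. By Proposition~\ref{prop:Ipi}, every component of $([k],I_M)$ is a single edge or a cycle, so in $([k],I_M)$ every vertex has degree $1$ or $2$. Adding the edges of $A\subseteq P_k$ (a subset of a Hamiltonian path on $[k]$) raises each degree by at most $2$, so in $G_0$ every vertex has degree between $1$ and $4$. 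The only obstruction to applying the corollary directly is the possible presence of degree-$1$ vertices (and, in principle, isolated vertices if we ever deleted edges, but here $I_M$ covers all of $[k]$, so there are no isolated vertices — every vertex has degree $\ge 1$).

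The second step is to eliminate degree-$1$ vertices by a local patching argument. I would build an auxiliary multigraph $G_1$ from $G_0$ as follows: repeatedly, while there is a vertex $v$ of degree $1$, add a parallel copy of its unique incident edge (this makes $\deg(v)=2$ and raises the degree of its neighbour by $1$, possibly to $5$); if some vertex now has degree $5$, ... — this is the delicate bookkeeping I want to avoid. A cleaner route: let $L$ be the set of degree-$1$ vertices of $G_0$; by Proposition~\ref{prop:Ipi} each such $v$ is an endpoint of an $I_M$-edge that is a whole component of $([k],I_M)$, i.e. $v$ belongs to a "single-edge" component $\{v,v'\}$ of $I_M$, and $v$ received no $A$-edge. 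Pair up such vertices and merge them appropriately, or simply note the following counting fact, which is really all we need: $\pw(G_0) \le \pw(G_0') $ where $G_0'$ is obtained by deleting from $G_0$ all components that are single edges with both endpoints of degree $1$ (these contribute pathwidth $1$, handled separately, or rather pathwidth $0$ once we recall a single edge has pathwidth $1$ — so I must be slightly careful and instead just add them into a path decomposition at the end at no asymptotic cost). After this pruning, in $G_0'$ every surviving vertex that had degree $1$ in $G_0$ had its $I_M$-edge go to a vertex of degree $\ge 2$; I then add a parallel edge at each such vertex. Each added parallel edge increases $m$ by $1$ and fixes one degree-$1$ vertex; it can push a neighbour's degree from $4$ up to $5$ only if that neighbour already had degree $4$ and is adjacent to a degree-$1$ vertex, and a short case check (using that the degree-$1$ vertex lies in a single-edge $I_M$-component) rules this out or lets me reroute.

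The third step is the arithmetic. Once we have a multigraph $G_2$ on the same vertex set with $2\le\deg\le 4$ everywhere and $m' = |I_M\cup A| + (\text{number of added parallel edges}) \le k + |A|$ edges — here I use $|I_M|\le k$ (since $I_M$ has at most one edge per vertex counted with the component structure, in fact $|I_M|\le k$) and each of the at most $k$ degree-$1$ vertices needs one parallel edge — Corollary~\ref{thm:bound-tw} gives $\pw(G_2) \le (m'-k)/3 + \epsilon k \le (k + |A| - k)/3 + \epsilon k = |A|/3 + \epsilon k$ for $k$ large enough. Since adding edges only increases pathwidth, $\pw(I_M\cup A)\le \pw(G_2) \le |A|/3+\epsilon k$, and absorbing the pruned single-edge components costs only an additive $O(1)$, which we fold into $\epsilon_k k$ by taking $\epsilon_k\to 0$ appropriately (for small $k$ the bound $\pw\le k$ is trivially of the form $\epsilon_k k$). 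Setting $\epsilon_k$ to be the $\epsilon$ from the corollary plus this $O(1/k)$ slack gives $\lim_{k\to\infty}\epsilon_k = 0$, as required.

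The main obstacle I anticipate is the handling of degree-$1$ vertices: Corollary~\ref{thm:bound-tw} is stated only for multigraphs of minimum degree $2$, so the bulk of the real work is the patching argument that raises all degrees to at least $2$ while (a) keeping the maximum degree at $4$ and (b) not adding more than $|A|$ extra edges relative to the target count — equivalently, showing the "$-n$" term in $(m-n)/3$ still cancels the "$k$" coming from $|I_M|$. Everything else (the degree bound $\le 4$, the edge count $|I_M|\le k$, monotonicity of pathwidth under edge addition, and the final substitution into the corollary) is routine.
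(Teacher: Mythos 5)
Your plan --- patch degree-$1$ vertices with parallel edges and then invoke Corollary~\ref{thm:bound-tw} --- is the paper's plan, but the paper executes it without any pruning or case analysis, and two of your hedges need to be resolved. The paper replaces \emph{every} single-edge component $\{u,v\}$ of $([k],I_M)$ by a $2$-cycle $uvu$, irrespective of whether $u$ or $v$ meets $A$; the resulting edge multiset $I_M'$ is a cycle cover of $[k]$, so $|I_M'|=k$ and every vertex has $I_M'$-degree exactly $2$. Adding $A\subseteq P_k$ raises each degree by at most $2$, so every degree of the multigraph $([k],I_M'\cup A)$ lies in $\{2,3,4\}$, and Corollary~\ref{thm:bound-tw} gives $\pw(I_M'\cup A)\le(|I_M'|+|A|-k)/3+\epsilon_k k=|A|/3+\epsilon_k k$; since parallel edges do not affect pathwidth, this bounds $\pw(I_M\cup A)$ as well.

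Two points in your write-up need tightening. First, the degree-$5$ worry cannot occur and needs no ``rerouting'': if $v$ has degree $1$ in $([k],I_M\cup A)$, then by Proposition~\ref{prop:Ipi} its unique $I_M$-neighbour $v'$ lies in the same single-edge component of $([k],I_M)$, so $v'$ also has $I_M$-degree $1$ and hence degree at most $3$ before doubling; doubling that edge puts both endpoints at degree at most $4$, so you should state this rather than leave it open. Second, the pruning step makes your closing arithmetic internally inconsistent: after deleting $p$ isolated single-edge components, the patched graph $G_2$ has $n'=k-2p$ vertices, so the corollary returns $(m'-(k-2p))/3+\epsilon(k-2p)$, and plugging your loose bound $m'\le k+|A|$ into it yields $(|A|+2p)/3+\epsilon_k k$, which can exceed $|A|/3+\epsilon_k k$ by $\Theta(k)$ when $p$ is linear in $k$ (take $A=\emptyset$ and $I_M$ a perfect matching). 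One can salvage this by tracking the counts exactly (in fact $m'\le n'+|A|$), but the simpler fix is precisely not to prune: double those components too, and each contributes $m-n=0$ to the numerator, which is the paper's route and removes all the bookkeeping you were trying to avoid.
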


\begin{proof}
Although $([k],I_M \cup A)$ may not be of minimum degree 2, we may consider the edge multiset $I_M'$ of the graph obtained from $([k], I_M)$ by replacing every single edge component $\{u,v\}$ by a 2-cycle $uvu$.
Then $I_M'$ is a cycle cover, so every vertex in multigraph $([k],I_M' \cup A)$ has degree between 2 and 4.
Hence, by Corollary~\ref{thm:bound-tw}, for some sequence $\epsilon_k$ with $\lim_{k\rightarrow\infty} \epsilon_k = 0$ we have that 
$\pw(I_M \cup A)=\pw(I_M' \cup A)\le |I'_M|+|A|-k)/3 + \epsilon_k k \le |A|/3 + \epsilon_k k$.
\end{proof}

By Lemma~\ref{lem:pw(D)} it follows that the running time in Theorem~\ref{thm:dp} is bounded by  $O(n^{(\tfrac{\alpha}{3}+\epsilon)k})$.
If we do not use the buckets at all, i.e., $\alpha=1$ and we have one big bucket of size $n$, we get the $O(n^{(\tfrac{1}{3}+\epsilon)k})$ bound.
By iterating over all at most $(2k)!$ connection patterns we get the following result, which already improves over the state of the art for large enough $k$.

\begin{theorem}
 For every fixed integer $k$, \probKOPTOpt can be solved in time $O(n^{(1/3+\epsilon_k)k})$, where $\lim_{k\rightarrow\infty} \epsilon_k = 0$.
\end{theorem}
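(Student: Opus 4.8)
The plan is to assemble the pieces that have already been built in the excerpt. We have Theorem~\ref{thm:dp}, which says that for a fixed valid connection $k$-pattern $M$ and a fixed bucket assignment $b$ with buckets of size $\lceil n^\alpha\rceil$, a $b$-monotone embedding of maximum $M$-gain can be found in $O(n^{\alpha(\tw(D_{M,b})+1)}k^2 + 2^k)$ time, where $D_{M,b}$ is the dependence graph with vertex set $[k]$ and edge set $O_b \cup I_M$. We also have Lemma~\ref{lem:pw(D)}, which bounds $\pw(I_M \cup A) \le |A|/3 + \epsilon_k k$ for any $A \subseteq P_k$ with $\lim_{k\to\infty}\epsilon_k = 0$; since treewidth is at most pathwidth, and since $O_b \subseteq P_k$, we get $\tw(D_{M,b}) \le \tw(I_M \cup O_b) \le \pw(I_M \cup O_b) \le |O_b|/3 + \epsilon_k k \le k/3 + \epsilon_k k$. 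So $\tw(D_{M,b}) + 1 \le (1/3 + \epsilon_k')k$ after absorbing the $+1$ into the $\epsilon_k$ term (for $k$ large; for small fixed $k$ one can simply take $\epsilon_k$ large enough to make the statement trivially true, which is consistent with ``for every fixed integer $k$'').

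The second ingredient is that we only need to enumerate connection patterns, not embeddings. First I would recall that by the equivalence established in Section~2.1, every $k$-move corresponds to a pair $(f, M)$ of an increasing embedding $f$ and a valid connection pattern $M$, and that \probKOPTOpt asks for a $k$-move of maximum gain. The number of valid connection patterns is at most the number of perfect matchings on $[2k]$, which is $(2k-1)!! \le (2k)!$. So the algorithm is: iterate over all valid connection patterns $M$; for each, use the trivial single-bucket assignment (i.e., $\alpha = 1$, one bucket $B_1 = [n]$, so every $b$-monotone embedding $[k]\to[n]$ is exactly an increasing embedding and $O_b = P_k$); invoke Theorem~\ref{thm:dp} to find a $b$-monotone (hence increasing) embedding of maximum $M$-gain in $O(n^{\tw(D_{M,b})+1}k^2 + 2^k)$ time; and return the best $k$-move found over all $M$. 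Correctness follows because every $k$-move is realized by some $(f,M)$ with $f$ increasing, and for each $M$ the DP finds the optimal such $f$.

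For the running time, with $\alpha = 1$ the per-pattern cost is $O(n^{\tw(D_{M,b})+1}k^2 + 2^k) \le O(n^{(1/3+\epsilon_k)k}k^2 + 2^k)$. Multiplying by the $(2k)!$ patterns and noting that $k$ is fixed (so $(2k)!$, $k^2$, and $2^k$ are all constants, and $2^k = O(n^{(1/3+\epsilon_k)k})$ for $n$ large enough), the total is $O(n^{(1/3+\epsilon_k)k})$, possibly after slightly enlarging $\epsilon_k$ while keeping $\lim_{k\to\infty}\epsilon_k = 0$. This is exactly the claimed bound.

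I do not expect a serious obstacle here, since the statement is essentially a corollary of Theorem~\ref{thm:dp} and Lemma~\ref{lem:pw(D)} together with the pattern-enumeration observation. The only point requiring a little care is the bookkeeping of constants versus the exponent: one must make sure that the factors $(2k)!$, $k^2$, and the additive $2^k$ are genuinely negligible for fixed $k$ and large $n$, and that folding the ``$+1$'' from $\tw(D_{M,b})+1$ into the exponent is legitimate --- this is where we use that the claim is only about the \emph{limit} of $\epsilon_k$, so for each fixed $k$ we are free to choose $\epsilon_k$ generously, and the asymptotic statement $\lim_{k\to\infty}\epsilon_k = 0$ is inherited directly from Lemma~\ref{lem:pw(D)}. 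The mildly delicate modeling step is confirming that with the single trivial bucket the notion of ``$b$-monotone embedding $[k]\to[n]$'' coincides with ``increasing embedding,'' which is immediate from condition (M2) since $O_b = P_k$ in that case.
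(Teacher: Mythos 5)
Your proof is correct and follows essentially the same route as the paper: fix $\alpha=1$ (one trivial bucket, so $O_b=P_k$ and $b$-monotone full embeddings coincide with increasing embeddings), bound $\tw(D_{M,b})$ via Lemma~\ref{lem:pw(D)}, apply Theorem~\ref{thm:dp}, and iterate over the at most $(2k)!$ connection patterns. The paper states this in two sentences without spelling out the bookkeeping of the constant factors and the $+1$ in the exponent, which you correctly note can be absorbed into $\epsilon_k$ since the claim is only asymptotic in $k$.
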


\subsection{An algorithm running in time $O(n^{(1/4+\epsilon)k})$ for $k$ large enough}

Let $\Mm_k$ be the set of all valid connection $k$-patterns.

\begin{lemma}
\label{lem:general-time}
 \probKOPTOpt can be solved in time $2^{O(k\log k)} n^{c(k)}$, where
 \begin{equation}
 \label{eq:c(k)}
 c(k) = \max_{M\in\Mm_k}\min_{\alpha\in[0,1]}\max_{A\subseteq P_k}\left((1-\alpha)(k-|A|)+\alpha(\tw(I_M\cup A)+1)\right).
 \end{equation}
\end{lemma}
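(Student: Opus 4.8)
The plan is to reduce \probKOPTOpt to a polynomial number of invocations of the dynamic programming of Theorem~\ref{thm:dp}, choosing the bucket size per connection pattern so as to balance two sources of cost: the number of bucket assignments we must enumerate, and the running time of the DP for a fixed assignment. First I would fix a valid connection pattern $M$ and a bucket exponent $\alpha$ (a constant to be chosen depending on $M$), so that all buckets have size $\ceil{n^\alpha}$ and hence $n_b = \ceil{n^{1-\alpha}}$. The outer loop iterates over all $|\Mm_k|\le(2k)!$ connection patterns, contributing the $2^{O(k\log k)}$ factor; for each $M$ we additionally loop over all nondecreasing bucket assignments $b:[k]\to[n_b]$. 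The number of such assignments is $\binom{n_b+k-1}{k} = O(n^{(1-\alpha)k})$ up to a $2^{O(k\log k)}$ factor, because a nondecreasing function into $[n_b]$ is a multiset of size $k$ from $[n_b]$. For each pair $(M,b)$ we run the DP of Theorem~\ref{thm:dp}, which costs $O(n^{\alpha(\tw(D_{M,b})+1)}k^2+2^k)$. Correctness of the overall scheme follows because every increasing embedding $f$ is $b$-monotone for exactly one bucket assignment $b$ (namely $b(i)$ = index of the bucket containing $f(i)$), and for that $b$ conditions (M1), (M2) hold; conversely every $b$-monotone embedding $[k]\to[n]$ is increasing, so we never consider an invalid move. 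Thus the maximum over all $(M,b)$ of the DP outputs equals the optimal gain.

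Next I would bound the running time for a fixed $M$ and a fixed choice of $\alpha$. Summing the DP cost over all bucket assignments, and recalling $E(D_{M,b}) = O_b\cup I_M$ where $O_b = \{\{i,i+1\}:b(i)=b(i+1)\}\subseteq P_k$, the total cost is, up to $2^{O(k\log k)}$ factors,
\[
\sum_{b} n^{\alpha(\tw(I_M\cup O_b)+1)} \le \sum_{A\subseteq P_k}\; (\text{\# of $b$ with $O_b=A$})\cdot n^{\alpha(\tw(I_M\cup A)+1)}.
\]
The key combinatorial estimate is that the number of nondecreasing $b:[k]\to[n_b]$ with $O_b = A$ is $2^{O(k\log k)} n^{(1-\alpha)(k-|A|)}$: fixing which consecutive pairs are "glued" (i.e.\ mapped to the same bucket) partitions $[k]$ into $k-|A|$ maximal runs, and each run's common bucket value is chosen freely among $\le n_b \le \ceil{n^{1-\alpha}}$ options subject to being nondecreasing, so the count is at most $\binom{n_b+(k-|A|)-1}{k-|A|} = O(n^{(1-\alpha)(k-|A|)})$ times a $2^{O(k\log k)}$ factor. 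Plugging this in, the cost for fixed $M,\alpha$ is $2^{O(k\log k)}\sum_{A\subseteq P_k} n^{(1-\alpha)(k-|A|)+\alpha(\tw(I_M\cup A)+1)}$, and since there are at most $2^{k-1}$ terms, this is $2^{O(k\log k)} n^{\max_{A\subseteq P_k}\left((1-\alpha)(k-|A|)+\alpha(\tw(I_M\cup A)+1)\right)}$.

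Finally, since $\alpha$ is ours to choose (for each $M$ independently — but picking the single worst-over-$M$ value of the minimizer only costs a $2^{O(k\log k)}$ factor in the outer loop anyway, or one simply takes the minimum inside), the exponent for pattern $M$ is $\min_{\alpha\in[0,1]}\max_{A\subseteq P_k}\left((1-\alpha)(k-|A|)+\alpha(\tw(I_M\cup A)+1)\right)$, and taking the maximum over $M\in\Mm_k$ gives exactly $c(k)$ as defined in \eqref{eq:c(k)}. Combining with the $2^{O(k\log k)}$ from enumerating connection patterns yields the claimed bound $2^{O(k\log k)}n^{c(k)}$. The main obstacle I anticipate is the bookkeeping in the combinatorial count of bucket assignments with a prescribed gluing pattern $O_b=A$ — in particular being careful that $A\subseteq P_k$ really does determine the runs and that the "$\le$" going from $\sum_b$ over assignments to $\sum_A$ over subsets is tight enough (it is, because distinct $A$ give disjoint sets of $b$'s), together with verifying that using buckets of uniform size $\ceil{n^\alpha}$ and the last partial bucket causes no off-by-one issue in the monotonicity argument. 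Everything else is a direct assembly of Theorem~\ref{thm:dp} with these two counting bounds.
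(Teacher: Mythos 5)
Your proof is correct and follows essentially the same route as the paper: iterate over all $2^{O(k\log k)}$ valid connection patterns and, per pattern, over all nondecreasing bucket assignments with buckets of size $\ceil{n^\alpha}$; invoke Theorem~\ref{thm:dp} for each pair; then group assignments by $O_b=A$, bound their count by $O(n^{(1-\alpha)(k-|A|)})$, and optimize $\alpha$ per pattern. The paper's proof is terser (it asserts the count of assignments with $O_b=A$ directly and sums without spelling out the run decomposition or the correctness of the enumeration), but the underlying argument is identical.
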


\begin{proof}
 We perform the algorithm from Theorem~\ref{thm:dp} for each possible valid connection pattern $M$ and every bucket assignment $b$, with all the buckets of size $\lceil n^{\alpha_M}\rceil$, for some $\alpha_M\in[0,1]$.
 Let us bound the total running time.
 Let $A\subseteq P_k$ and consider a bucket assignment $b$ such that $O_b=A$.
 There are $n^{(1-\alpha_M)(k-|A|)}$ such bucket assignments, and by Theorem~\ref{thm:dp} for each of them the algorithm uses time $O(n^{\alpha_M(\tw(I_M\cup A)+1)}k^2+2^k)$.
 Hence the total running time is bounded by
 \begin{equation}
 \begin{split} 
 & \sum_{M\in\Mm_k}\sum_{A\subseteq P_k}\sum_{\substack{b:[k]\rightarrow[\ceil{n/\ceil{n^{\alpha_M}}}]\\\text{$b$ nondecreasing}\\ O_b=A}} O(n^{\alpha_M(\tw(I_M\cup A)+1)}k^2+2^k) 
   = \\
 &  O(2^k)\sum_{M\in\Mm_k}\sum_{A\subseteq P_k}n^{(1-\alpha_M)(k-|A|)}\cdot n^{\alpha_M(\tw(I_M\cup A)+1)}
 \end{split}
 \end{equation}
For every $M\in\Mm_k$, the optimal value of $\alpha_M$ can be found by a simple LP (see Section~\ref{sec:small}). The claim follows.
\end{proof}

\begin{theorem}
\label{thm:1/4}
 For every fixed integer $k$, \probKOPTOpt can be solved in time $O(n^{(1/4+\epsilon_k)k})$, where $\lim_{k\rightarrow\infty} \epsilon_k = 0$.
\end{theorem}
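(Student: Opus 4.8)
The plan is to bound the quantity $c(k)$ from Lemma~\ref{lem:general-time} by $(1/4+\epsilon_k)k$ for a suitable sequence $\epsilon_k\to 0$, and then invoke that lemma directly. Since $c(k)$ is a $\max$ over $M\in\Mm_k$ of a $\min$ over $\alpha\in[0,1]$ of a $\max$ over $A\subseteq P_k$, it suffices to exhibit, for each fixed valid connection pattern $M$, a single good choice $\alpha=\alpha_k$ (independent of $A$ and even of $M$) for which
\[
\max_{A\subseteq P_k}\bigl((1-\alpha_k)(k-|A|)+\alpha_k(\tw(I_M\cup A)+1)\bigr)\le (1/4+\epsilon_k)k.
\]
I would take $\alpha_k=3/4$ (up to a lower-order correction; see below), motivated by balancing the two regimes.

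First I would use Lemma~\ref{lem:pw(D)} (and $\tw\le\pw$) to write $\tw(I_M\cup A)+1\le |A|/3+\epsilon'_k k+1$ for a sequence $\epsilon'_k\to 0$. Substituting this and setting $\alpha=\alpha_k$, the expression to be maximized over $A$ becomes a linear function of the single quantity $a:=|A|\in\{0,1,\dots,k-1\}$, namely
\[
(1-\alpha_k)(k-a)+\alpha_k\Bigl(\tfrac{a}{3}+\epsilon'_k k+1\Bigr)
= (1-\alpha_k)k + \alpha_k\epsilon'_k k + \alpha_k + a\Bigl(\tfrac{\alpha_k}{3}-(1-\alpha_k)\Bigr).
\]
With $\alpha_k=3/4$ the coefficient of $a$ is $\tfrac14-\tfrac14=0$, so the bound is independent of $a$ and equals $\tfrac14 k + \tfrac34\epsilon'_k k + \tfrac34$, which is of the form $(1/4+\epsilon_k)k$ with $\epsilon_k=\tfrac34\epsilon'_k+\tfrac3{4k}\to 0$. (One subtlety: $\alpha_k$ must lie in $[0,1]$ and the number of buckets $\lceil n/\lceil n^{\alpha_k}\rceil\rceil$ must make sense; $\alpha_k=3/4$ is fine. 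If one prefers the clean coefficient-zero cancellation to survive the ceiling $\lceil n^{\alpha_k}\rceil$ and the rounding in the count of bucket assignments, a harmless $o(1)$ perturbation of $\alpha_k$ absorbs it, contributing only to $\epsilon_k$.) Since this bound holds for every $M$, we get $c(k)\le(1/4+\epsilon_k)k$, and Lemma~\ref{lem:general-time} yields running time $2^{O(k\log k)}n^{(1/4+\epsilon_k)k}$; for $k$ fixed the $2^{O(k\log k)}$ factor is a constant, and by slightly enlarging $\epsilon_k$ we may absorb it, giving $O(n^{(1/4+\epsilon_k)k})$ as claimed.

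The only real content beyond bookkeeping is the choice $\alpha_k=3/4$ and the observation that, after plugging in the pathwidth bound of Lemma~\ref{lem:pw(D)}, the worst case over $A$ is attained uniformly (the dependence on $|A|$ vanishes); everything else is substitution and collecting $o(1)$ terms into $\epsilon_k$. The main obstacle I anticipate is purely notational hygiene: one must be careful that the $\epsilon$ in Lemma~\ref{lem:pw(D)} depends only on $k$ (not on $M$ or $A$), that the number of nondecreasing bucket assignments with $O_b=A$ is indeed at most $n^{(1-\alpha_k)(k-|A|)}$ as used in the proof of Lemma~\ref{lem:general-time}, and that the ceilings in the bucket sizes perturb the exponent by at most $o(k)$. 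None of these requires a new idea; they are exactly the routine estimates already set up in Lemma~\ref{lem:general-time} and Theorem~\ref{thm:dp}.
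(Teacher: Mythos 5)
Your proposal is correct and is essentially identical to the paper's proof: both fix $\alpha=3/4$, invoke Lemma~\ref{lem:pw(D)} to bound $\tw(I_M\cup A)$, observe that the dependence on $|A|$ cancels at this value of $\alpha$, and set $\epsilon_k=\tfrac34\epsilon_k'+\tfrac{3}{4k}$ before applying Lemma~\ref{lem:general-time}. Your write-up simply spells out the algebra that the paper leaves implicit.
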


\begin{proof}
 Fix the same value $\alpha=3/4$ for every connection pattern $M$.
 By Lemma~\ref{lem:pw(D)} we have $(1-\alpha)(k-|A|)+\alpha(\tw(I_M\cup A)+1)\le(\tfrac{1}4+\tfrac{3}{4k}+\tfrac{3}4\epsilon_k')k$.
 The claim follows by Lemma~\ref{lem:general-time}, after putting $\epsilon_k=\tfrac{3}{4k}+\tfrac{3}4\epsilon_k'$.
\end{proof}

\subsection{Saving space}

The algorithm from Theorem~\ref{thm:1/4}, as described above, uses $O(n^{(1/4+\epsilon_k)k})$ space.
However, a closer look reveals that the space can be decreased to $O(n^{(1/8+\epsilon_k)k})$.
This is done by exploiting some properties of the specific tree decomposition of graphs of maximum degree 4, described by Fomin et al.~\cite{fomin-algorithmica-2009}, which we used in Theorem~\ref{thm:bound-tw-fomin}.

This decomposition is obtained as follows.
Let $D$ be a $k$-vertex graph of maximum degree 4.
As long as $D$ contains a vertex $v$ of degree 4, we remove $v$.
As a result we get a set of removed vertices $S$ and a subgraph $D'=D-S$ of maximum degree 3.
Then we construct a tree decomposition $\Tt'$ of $D'$, of width at most $(1/6+\epsilon_k)k$, given in the paper of Fomin and H\o{}ie~\cite{FominH06}.
The tree decomposition $\Tt$ of $D$ is then obtained by adding $S$ to every bag of $\Tt'$. 
An inductive argument (see~\cite{fomin-algorithmica-2009}) shows that the width of $\Tt$ is at most $\frac{1}3k_4+\frac{1}6k_3+\epsilon_kk$.

Assume we are given a partial $b$-monotone embedding $f_0:S\rightarrow [n]$, where $S$ is the set of removed vertices mentioned in the previous paragraph.
Consider the dynamic programming algorithm from Theorem~\ref{thm:dp}, which finds a $b$-monotone embedding of maximum $M$-gain, for a given bucket assignment $b$ and connection pattern $M$.
It is straightforward to modify this algorithm so that it computes a $b$-monotone embedding of maximum $M$-gain that extends $f_0$.
The resulting algorithm runs in time $O(n^{\alpha(\tw(D-S)+1)}k^2)$ and uses space $O(n^{\alpha(\tw(D-S)+1)})$.
Recalling that $\alpha=3/4$ and $\tw(D-S)\le (1/6+\epsilon_k)k$, we get the space bound of $O(n^{(1/8+\epsilon_k)k})$.
Repeating this for each of $n^{\alpha |S|}$ embeddings of $S$ takes time $O(n^{\alpha(|S|+\tw(D-S)+1)})$ instead of $O(n^{\alpha(\tw(D)+1)})$ from Theorem~\ref{thm:dp}.
However, as explained above, the bound on $\tw(D)$ from Theorem~\ref{thm:bound-tw-fomin} used in the proof of Theorem~\ref{thm:1/4} is also a bound on $|S|+\tw(D-S)$, so the time of the whole algorithm is still bounded by $O(n^{(1/4+\epsilon_k)k})$.

\begin{theorem}
\label{thm:1/4-space}
 For every fixed integer $k$, \probKOPTOpt can be solved in time $O(n^{(1/4+\epsilon_k)k})$ and space $O(n^{(1/8+\epsilon_k)k})$, where $\lim_{k\rightarrow\infty} \epsilon_k = 0$.
\end{theorem}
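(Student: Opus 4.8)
The plan is to run the dynamic programming of Theorem~\ref{thm:dp} essentially unchanged, but organized along the particular tree decomposition of graphs of maximum degree $4$ used inside Theorem~\ref{thm:bound-tw-fomin}, and to guess the ``heavy'' part of that decomposition explicitly rather than store it. Concretely, for each valid connection pattern $M$ and each bucket assignment $b$, let $D=D_{M,b}$; recall from~\cite{fomin-algorithmica-2009} that one greedily deletes degree-$4$ vertices of $D$ to obtain a set $S$ and a subgraph $D-S$ of maximum degree $3$, takes the width-$(\tfrac16+\epsilon_k)k$ tree decomposition $\Tt'$ of $D-S$ from~\cite{FominH06}, and adds $S$ to every bag to obtain a tree decomposition $\Tt$ of $D$ with $w(\Tt)=|S|+w(\Tt')\le \tfrac13k_4+\tfrac16k_3+\epsilon_kk$, which (as in the proof of Lemma~\ref{lem:pw(D)}, via Corollary~\ref{thm:bound-tw} applied to $I_M\cup O_b$) is the same quantity already bounded by $(\tfrac14+\epsilon_k)k$ after multiplying by $\alpha=3/4$.

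Next I would fix a $b$-monotone partial embedding $f_0\colon S\to[n]$ and run the recurrences of Section~\ref{sec:dp} on $\Tt'$, but with every table $T_t[f]$ restricted to $f$ that agree with $f_0$ on $S$: vertices of $S$ are never introduced or forgotten, so in the introduce/forget/join steps the only free coordinates are those indexed by $X_t\cap([k]\setminus S)$, of which there are at most $w(\Tt')+1$. Correctness is inherited verbatim from Section~\ref{sec:dp}: since $S$ lies in every bag of $\Tt$, the object $\Tt$ is a genuine tree decomposition of $D$, and the only place where $b$-monotonicity appealed to the separation property of Lemma~\ref{lem:tw-separate} was for order-pairs $\{j,j+1\}\in O_b$, which is unaffected. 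By Lemma~\ref{lem:node-process-time} this restricted DP runs in time $O(n^{\alpha(w(\Tt')+1)}k^2)$, and --- the crucial point --- since only the $O(k)$ tables along the current root-to-leaf path of $\Tt'$ need to be kept, it uses space $O(n^{\alpha(w(\Tt')+1)})$. With $\alpha=3/4$ and $w(\Tt')\le(\tfrac16+\epsilon_k')k$ this is $O(n^{(1/8+\epsilon_k)k})$ for a suitable $\epsilon_k=\tfrac34\epsilon_k'+\tfrac3{4k}$.

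Finally I would wrap this in loops over $M\in\Mm_k$, over bucket assignments $b$, and over the $n^{\alpha|S|}$ candidate partial embeddings $f_0\colon S\to[n]$ satisfying (M1)--(M2) within $S$, keeping only the best $M$-gain and the corresponding $k$-move, so the $O(n^{(1/8+\epsilon_k)k})$ working space is reused across all iterations. Re-running the inner DP for each $f_0$ multiplies the time per pair $(M,b)$ by $n^{\alpha|S|}$, giving $O(n^{\alpha(|S|+w(\Tt')+1)}k^2)=O(n^{\alpha(w(\Tt)+1)}k^2)$ --- exactly the time of Theorem~\ref{thm:dp}. Summing over the $(2k)!$ patterns, the $2^{O(k)}$ bucket assignments, and the $2^{O(k)}$ values of $|O_b|$ costs only a factor $2^{O(k\log k)}$, which for fixed $k$ is absorbed into $\epsilon_k$; and for every $A=O_b$ the bound $(1-\alpha)(k-|A|)+\alpha(|S|+w(\Tt')+1)\le(\tfrac14+\epsilon_k)k$ holds for the same reason it held in the proof of Theorem~\ref{thm:1/4}, since $|S|+w(\Tt')\le\tw(I_M\cup A)$-type bound of Corollary~\ref{thm:bound-tw}.

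The step I expect to require the most care is the rigor of the restricted DP, i.e., verifying that fixing $f_0$ on $S$ and running the Section~\ref{sec:dp} recurrences on $\Tt'$ really computes $\max\gain_M(g)$ over all $b$-monotone $g\colon[k]\to[n]$ extending $f_0$. Two bookkeeping points must be checked: every edge of $E^+_f$ with an endpoint among $S$ must be counted exactly once (it should be subtracted at the introduce node of its non-$S$ endpoint, using $f_0$ for the other endpoint), and every order-pair $\{j,j+1\}\in O_b$ with one endpoint in $S$ must still be enforced (its non-$S$ endpoint occurs in some bag of $\Tt'$, where the inequality $f_0(j)<f(j+1)$ or $f(j)<f_0(j+1)$ can be imposed, using that $S$ sits in every bag). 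Once this is pinned down, combining the per-call space bound with the unchanged time bound of Theorems~\ref{thm:dp} and~\ref{thm:1/4} yields the claim.
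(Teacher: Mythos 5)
Your proposal is correct and follows essentially the same route as the paper: greedily delete degree-$4$ vertices of $D_{M,b}$ to obtain $S$, take the Fomin--H\o{}ie path decomposition of $D-S$ of width $(\tfrac16+\epsilon_k)k$, and for each of the $n^{\alpha|S|}$ choices of $f_0\colon S\to[n]$ run the Section~\ref{sec:dp} DP over the small decomposition with $f_0$ held fixed, reusing the working memory across iterations; the time bound then comes back to $n^{\alpha(|S|+w(\Tt')+1)}$, which is controlled exactly by the quantity $\tfrac13k_4+\tfrac16k_3+\epsilon_kk$ from Theorem~\ref{thm:bound-tw-fomin} already used in Theorem~\ref{thm:1/4}. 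You are more explicit than the paper about the two bookkeeping points the paper dismisses as ``straightforward'' --- charging each $E^+$-edge with one endpoint in $S$ at the introduce step of its non-$S$ endpoint, and enforcing order constraints $\{j,j+1\}\in O_b$ straddling $S$ --- and this extra care is warranted, but the underlying decomposition, the guess-and-recurse structure, and the final bookkeeping against the Fomin et al.\ width bound all coincide with the paper.
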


Another interesting observation is that if we build set $S$ by picking an arbitrary vertex of every edge in $O_b$, then $D':=D-S$ contains no edges of $O_b$, so it has maximum degree at most 2.
It follows that $\tw(D')\le 2$. 
Thus, in Lemma~\ref{lem:general-time} we can bound $\tw(I_M\cup A) \le |A|+2$ and for $\alpha=1/2$ we get the running time of $O(n^{k/2+3/2})$.
By using the approach of fixing all embeddings of $S$ described above, we get the space of $O(n^{\alpha\tw(D')})=O(n^{3/2})$ which is less than the $\Theta(n^2)$ space needed to store all the distances of the TSP instance.
However, the additional space can be further improved. After fixing an embedding of $S$ we find the embedding of every connected component of $D-S$ separately.
Consider such a component. 
If it is a cycle, we consider all $O(n^{\alpha})=O(n^{1/2})$ ways of fixing one of its vertices and we are left with a path, say $v_1,\ldots,v_\ell$.
The dynamic programming described in Section~\ref{sec:dp} operates on a nice path decomposition of the form $\{v_1\},\{v_1,v_2\},\{v_2\},\{v_2,v_3\},\ldots,\{v_\ell\}$.
It uses space $O(n^{2\alpha})=O(n)$ in the bags of size 2. However, by combining formulas~\eqref{eq:intro} and~\eqref{eq:forget} one can compute the DP tables for size 1 bags only, using space $O(n^{\alpha})=O(n^{1/2})$.

\begin{theorem}
\label{thm:n^1/2-space}
 For every fixed integer $k$, \probKOPTOpt can be solved in time $O(n^{k/2+3/2})$ and additional space $O(\sqrt{n})$.
\end{theorem}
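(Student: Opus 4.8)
The plan is to combine three ingredients that the excerpt has already set up: the dynamic programming of Theorem~\ref{thm:dp} specialized to a path decomposition of small width, the trick of first fixing an embedding of a carefully chosen vertex set $S$, and a final space optimization that collapses the size-$2$ bags of the path decomposition. First I would choose $\alpha=1/2$, so every bucket has size $\lceil\sqrt n\rceil$; as observed in the excerpt, if $S$ is built by taking one endpoint of each edge of $O_b$ then $D-S$ has maximum degree at most $2$ (every remaining vertex keeps only its $\le 2$ incident $I_M$-edges, since $I_M$ is a cycle cover by Proposition~\ref{prop:Ipi}), hence $\tw(D-S)\le 2$ and, more importantly, $D-S$ is a disjoint union of paths and cycles. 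Iterating over all $n_\epsilon$-irrelevant, all $(2k)!$ connection patterns $M$ and all $O(n^{(1-\alpha)(k-|A|)})$ bucket assignments $b$ with $O_b=A$, and then over all $O(n^{\alpha|S|})$ $b$-monotone embeddings of $S$, costs $2^{O(k\log k)}\cdot n^{(k-|A|)/2}\cdot n^{|S|/2}$; since $|S|\le|A|$ this is $2^{O(k\log k)}n^{k/2}$, and the per-call DP over $D-S$ contributes only $n^{\alpha(\tw(D-S)+1)}=O(n^{3/2})$ time and, naively, $O(n)$ space.

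The key step is to replace that $O(n)$ space by $O(\sqrt n)$. After $S$ is fixed, the components of $D-S$ are handled independently. For a path component $v_1,\dots,v_\ell$ I would run the DP on the nice path decomposition with bags $\{v_1\},\{v_1,v_2\},\{v_2\},\{v_2,v_3\},\dots,\{v_\ell\}$; for a cycle component I would first branch over the $O(\sqrt n)$ $b$-monotone values of one distinguished vertex, which breaks the cycle into a path and merely multiplies the running time by $O(\sqrt n)$, staying within the $O(n^{k/2+3/2})$ budget. The point is that in each step only the table indexed by the current size-$1$ bag needs to be retained: composing the introduce-formula~\eqref{eq:intro} for $v_{i+1}$ with the forget-formula~\eqref{eq:forget} for $v_i$ gives a direct transition from the table over $\{v_i\}$ to the table over $\{v_{i+1}\}$ without ever materializing the intermediate table over $\{v_i,v_{i+1}\}$. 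Each such table has $O(n^{\alpha})=O(\sqrt n)$ entries, and computing the next one from the current one takes $O(\sqrt n\cdot n^{\alpha})=O(n)$ time per interface pair, i.e. $O(n^{3/2})$ over all edges of $D-S$ as claimed, while only $O(\sqrt n)$ cells are alive at any moment. Summing the component table-maxima gives the optimal $M$-gain of a $b$-monotone embedding extending $f_0$; taking the maximum over $M$, $b$ and $f_0$ yields the answer.

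I would present this as three short claims: (i) with $\alpha=1/2$ and $S$ as above, $D-S$ is a union of paths and cycles, so $\tw(D-S)\le 2$; (ii) the branch-and-DP described solves the per-$(M,b,f_0)$ subproblem in $O(n^{3/2})$ time and $O(\sqrt n)$ additional space; (iii) the outer enumeration has $2^{O(k\log k)}n^{k/2}$ iterations because $|S|\le|A|$, for a total of $O(n^{k/2+3/2})$ time. The main obstacle — and the part deserving the most care — is (ii): one must check that fusing the introduce and forget transitions is valid for \emph{partial} $b$-monotone embeddings (recall a $b$-monotone partial embedding need not even be non-decreasing), that the $(M2)$ constraints between $v_i$ and $v_{i+1}$ on a path component of $D-S$ are exactly those that the path-decomposition DP enforces (they are, since any such constraint is an $O_b$-edge inside the current bag), and that the interference edges crossing into $S$ are already paid for once $f_0$ is fixed, so no table entry over $D-S$ ever needs to remember anything outside its size-$1$ bag. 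Once that bookkeeping is in place, the running-time and space accounting is the routine calculation sketched above, and the theorem follows.
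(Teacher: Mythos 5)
Your proposal matches the paper's own proof: choose $\alpha=1/2$, build $S$ by taking one endpoint of each edge of $O_b$ so that $D-S$ is a union of paths and cycles of $I_M$-edges, enumerate $(M,b,f_0)$ with cost $2^{O(k\log k)}n^{k/2}$ using $|S|\le|A|$, break cycles by guessing one vertex in $O(\sqrt n)$ ways, and fuse consecutive introduce/forget transitions so only the size-$1$-bag tables (of size $O(\sqrt n)$) are ever stored. One small inaccuracy in your bookkeeping check: by construction $S$ covers every edge of $O_b$, so no $(M2)$ constraint has both endpoints in $D-S$; the $(M2)$ conditions relevant to a $D-S$ vertex all go to neighbours in $S$ and are checked directly against the fixed $f_0$ when that vertex is introduced, rather than being ``$O_b$-edges inside the current bag'' of the path decomposition of $D-S$. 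This does not affect correctness, and otherwise your time and space accounting agrees with the paper.
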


We suppose that more space/time trade-offs are possible by finding small sets whose removal makes the tree decomposition somewhat small.

\subsection{Small values of $k$}
\label{sec:small}

The value of $c(k)$ in Lemma~\ref{lem:general-time} can be computed using a computer programme for small values of $k$, by enumerating all connection patterns and
using formula~(\ref{eq:c(k)}) to find optimum $\alpha$.
We used a C++ implementation (see \texttt{\small http://www.mimuw.edu.pl/\~{}kowalik/localtsp/localtsp.cpp} for the source code) including a simple $O(2^k)$ dynamic programming for computing treewidth described in the work of Bodlaender et al.~\cite{BodlaenderFKKT12}.
For every valid connection pattern $M$ our program finds the value of $\min_{\alpha\in[0,1]}\max_{A\subseteq P_k}\left((1-\alpha)(k-|A|)+\alpha(\tw(I_M\cup A)+1)\right)$ by solving a simple linear program, as follows.
\begin{align*}
\text{minimize} \quad v& & \\
\text{subject to} \quad v & \ge (1-\alpha)(k-s) + \alpha \max_{\substack{A\subseteq P_k\\|A|=s}}(\tw(I_M\cup A)+1), \quad s=0,\ldots,k-1\\
 \alpha & \in [0,1]
\end{align*}

We get running times for $k=5,\ldots,10$ described in Table~\ref{tab:small-k}.
It turns out that for $k=5,\ldots,10$ the running time does not grow when we fix the same size of the buckets $n^\alpha$ for all connection patterns, hence in Table~\ref{tab:small-k} we present also the values of $\alpha$.


\begin{table}
\begin{center}
\begin{tabular}{l|c|c|c|c|c|c}
$k$ & 5 & 6 & 7 & 8 & 9 & 10\\
\midrule
$\alpha$ & $\tfrac{2}3$ & $\tfrac{3}4$ & $\tfrac{3}4$ & $\tfrac{2}3$ & $\tfrac{4}5$ & $\tfrac{4}5$ \\
time     & $O(n^{3\tfrac{2}3})$ & $O(n^4)$ & $O(n^{4.25})$ & $O(n^{4\tfrac{2}3})$ & $O(n^{5})$ & $O(n^{5.2})$
\end{tabular}
\end{center}
\caption{\label{tab:small-k}Running times of the algorithm from Theorem~\ref{thm:1/4} for $k=5,\ldots,10$.}
\end{table}

\subsection{A refined analysis of \probFiveOPTOpt}

In this section we focus on \probFiveOPTOpt problem. This the first case where our findings may have a practical relevance, which motivates us towards a deepened analysis.
It turns out that to get the entry for $k=5$ in Table~\ref{tab:small-k} we do not need a computer, and the proof is rather short, as one can see below.

\begin{theorem}
\label{thm:3.66}
 \probFiveOPTOpt can be solved in time $O(n^{3 \frac{2}3})$.
\end{theorem}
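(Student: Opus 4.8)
The plan is to apply Lemma~\ref{lem:general-time} with $k=5$, but rather than invoking the asymptotic bound of Lemma~\ref{lem:pw(D)}, I would compute $c(5)$ directly by a short case analysis over the possible shapes of the graph $([5],I_M)$. By Proposition~\ref{prop:Ipi}, every connected component of $([k],I_M)$ is a cycle or a single edge, and a valid connection pattern yields a single $k$-cycle after the identification of vertices $2i$ with $(2i+1)\bmod 2k$; for $k=5$ this forces $([5],I_M)$ to be (up to relabeling) either a single $5$-cycle or the disjoint union of a triangle and a single edge (a matching of all five vertices is impossible on an odd number of vertices, and $5 = 5$ or $5 = 3+2$ are the only relevant partitions into cycles/edges of size $\ge 2$). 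So it suffices to bound, for each of these two graphs $I_M$, the quantity $\min_{\alpha\in[0,1]}\max_{A\subseteq P_5}\bigl((1-\alpha)(5-|A|)+\alpha(\tw(I_M\cup A)+1)\bigr)$, and then take the maximum over the two cases.

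The key step is to understand how $\tw(I_M\cup A)$ grows with $|A|$, where $A\subseteq P_5=\{\{1,2\},\{2,3\},\{3,4\},\{4,5\}\}$. Since $I_M$ is a cycle cover (after doubling single edges into $2$-cycles, as in Lemma~\ref{lem:pw(D)}), adding the path edges of $A$ keeps the multigraph of maximum degree $4$; moreover for a $5$-vertex graph the treewidth is trivially at most $4$, and in fact at most $|A|+2$ by the argument following Theorem~\ref{thm:n^1/2-space} (remove one endpoint of every edge of $A$, leaving a graph of maximum degree $\le 2$, hence treewidth $\le 2$). I would tabulate, for $s=0,1,2,3,4$, the worst-case value $\tw(I_M\cup A)$ over all $A$ with $|A|=s$: for $s=0$ we get $\tw(I_M)=2$ (it is a cycle or a triangle-plus-edge, all of treewidth $2$); the values increase by at most $1$ per added edge and saturate. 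Plugging these into the LP
\begin{align*}
\text{minimize} \quad & v\\
\text{subject to} \quad & v \ge (1-\alpha)(5-s) + \alpha(\tw_s+1), \quad s=0,\ldots,4,\\
 & \alpha \in [0,1],
\end{align*}
where $\tw_s$ is the tabulated worst case, and checking that $\alpha=2/3$ is optimal, should give $v = 11/3$, matching the $k=5$ entry of Table~\ref{tab:small-k}. Finally, Lemma~\ref{lem:general-time} turns this into a running time of $2^{O(k\log k)}n^{c(5)} = O(n^{11/3})$, since $k=5$ is a constant, and $11/3 = 3\frac{2}{3}$.

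The main obstacle I anticipate is the bookkeeping in the case analysis: one has to be careful that the ``worst'' $A$ for each size $s$ is chosen consistently (the $\max$ over $A$ is inside, so different $s$ may have different extremal $A$), and that the binding constraints in the LP are correctly identified --- it is the trade-off between small $s$ (large first term $(1-\alpha)(5-s)$, small treewidth term) and large $s$ (small first term, treewidth up to $4$) that determines the optimum, and one must verify the two constraints $s=0$ and $s$ large cross at exactly $\alpha=2/3$ giving $v=11/3$. A secondary subtlety is confirming that the triangle-plus-single-edge configuration, once the single edge is doubled, does not yield a larger treewidth than the $5$-cycle configuration when the path edges of $A$ are added in the worst way; but since both are $5$-vertex graphs of maximum degree $4$ and the relevant treewidths are small integers, a direct check settles it. Everything else is a routine evaluation of the formula in Lemma~\ref{lem:general-time}.
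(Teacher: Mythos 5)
Your overall strategy is the same as the paper's: enumerate the possible shapes of $([5],I_M)$, bound $\tw(I_M\cup A)$ as a function of $|A|$, and balance the bucket-size exponent $\alpha$ so that the $|A|$-small and $|A|$-large contributions in Lemma~\ref{lem:general-time} are equal. Your identification of the two possible shapes of $([5],I_M)$ (a $5$-cycle, or a triangle plus a disjoint edge) is correct, and the LP you write down is exactly the one the paper solves in Section~\ref{sec:small}.

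However, there is a real gap in the crucial tabulation of $\tw_s := \max_{|A|=s}\tw(I_M\cup A)$. You correctly observe $\tw_0=2$, but the only general bound you invoke — $\tw(I_M\cup A)\le |A|+2$ from the argument following Theorem~\ref{thm:n^1/2-space}, together with ``values increase by at most $1$ per added edge'' — gives only $\tw_1\le 3$. Feeding $\tw_1=3$ into your LP makes the $s=1$ constraint $v\ge 4(1-\alpha)+4\alpha=4$, so the LP value would be $4$, not $11/3$, and the theorem would not follow. To get $v=11/3$ you need the sharper fact $\tw_1=2$, i.e.\ $\tw(I_M\cup A)\le 2$ whenever $|A|\le 1$. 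This is precisely the observation the paper makes in its Case~1: when a single edge of $P_5$ is added to a $5$-vertex graph of maximum degree~$2$, at most two vertices can reach degree~$3$, so the result has no $K_4$ minor (any $K_4$-subdivision on five vertices needs four branch vertices of degree~$\ge 3$), hence treewidth~$\le 2$. Once you establish $\tw_0=\tw_1=2$ and $\tw_s\le 3$ for $s\ge 2$ (the latter since $D\ne K_5$, having $\le 9$ edges), the binding constraints are $s=0$ giving $5-2\alpha$ and $s=2$ giving $3+\alpha$, which cross at $\alpha=2/3$, $v=11/3$, as you predicted. So the idea is right, but the one step you left as ``should give $11/3$'' is exactly where the paper's proof does its real work, and the bound you cited in its place is not strong enough to close it.
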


\begin{proof}
Let $D=([5],I_M\cup A)$ be the dependence multigraph. 
Since $K_5$ is the only 5-vertex graph with treewidth larger than 3, and $D$ has at most different 9 edges, we note that $\tw(D)\le 3$.

\mycase{1: } $|A|\le 1$. 
Then either $D$ has maximum degree 2, or $D$ is a 5-cycle with a single chord.
In both cases it is easy to see that $\tw(D) \le 2$.
By Lemma~\ref{lem:general-time} this case contributes $O(n^{5(1-\alpha) + 3\alpha})=O(n^{5-2\alpha})$ to the running time.

\mycase{2: } $|A|\ge 2$.
By Lemma~\ref{lem:general-time}, this case contributes $O(n^{(5-|A|)(1-\alpha) + 4\alpha})=O(n^{3+\alpha})$ to the running time.

Putting $\alpha=2/3$ finishes the proof.
\end{proof}

One can see that the tight cases in the above proof are $|A|=0$ and $|A|=2$. A closer look at the $|A|=2$ case reveals that the source of hardness of this case is a single (up to isomorphism) graph $([5],I_M\cup A)$ of treewidth 3. It turns out that using a different bucket partition design one can save some running time in this particular case. The details are given in the proof of Theorem~\ref{thm:3.5}. However, first we need a simple technical lemma, which extends Lemma~\ref{lem:node-process-time} to general (not necessarily nice) path decompositions (it is true also for tree decompositions, but we do not need it).

\begin{lemma}
\label{lem:node-process-time-not-nice}
 Let $M$ be a valid connection $k$-pattern and let $b:[k]\rightarrow[n]$ be a bucket assignment.
 For every $i\in [k]$ let $s_i$ be the size of the bucket assigned to $i$.
 Let $(X_1,\ldots,X_r)$ be a path decomposition of $D_{M,b}$.
 Then, a $b$-monotone embedding of maximum $M$-gain can be found in time $O(rk^2\max_{t\in[r]}\prod_{i\in X_t}s_i)$.
\end{lemma}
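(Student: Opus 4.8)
The plan is to reduce Lemma~\ref{lem:node-process-time-not-nice} to the already-established machinery by converting the arbitrary path decomposition $(X_1,\ldots,X_r)$ of $D_{M,b}$ into a nice path decomposition and invoking Theorem~\ref{thm:dp} (or rather its proof, i.e. the dynamic programming from Section~\ref{sec:dp} together with Lemma~\ref{lem:node-process-time}). First I would apply Proposition (Lemma 7.4 in~\cite{fpttextbook}) to turn $(X_1,\ldots,X_r)$ into a nice path decomposition $\Tt'=(X'_1,\ldots,X'_{r'})$ of $D_{M,b}$ of the same width; crucially, the standard construction only \emph{inserts} introduce and forget nodes between consecutive bags $X_t, X_{t+1}$ (introducing the vertices of $X_{t+1}\setminus X_t$ one at a time and forgetting the vertices of $X_t\setminus X_{t+1}$ one at a time), so every bag of $\Tt'$ is a subset of some bag $X_t$ of the original decomposition. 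Consequently, for each node $s$ of $\Tt'$ we have $\prod_{i\in X'_s}s_i \le \max_{t\in[r]}\prod_{i\in X_t}s_i$.

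Next I would run the dynamic programming of Section~\ref{sec:dp} on $\Tt'$. The correctness is exactly as in the proof of Theorem~\ref{thm:dp}: the leaf/introduce/forget formulas~\eqref{eq:intro},~\eqref{eq:forget} were proven for arbitrary nice tree decompositions, and a nice path decomposition is a nice tree decomposition without join nodes, so no new case analysis is needed; the final answer sits in the table of the root bag $\emptyset$. For the running time, by Lemma~\ref{lem:node-process-time} processing a node $s$ of $\Tt'$ takes $O(k\prod_{i\in X'_s}s_i)$ time, which by the observation above is $O(k\max_{t\in[r]}\prod_{i\in X_t}s_i)$. The number of nodes $r'$ of the nice decomposition is $O(\sum_{t} (|X_t\setminus X_{t+1}|+|X_{t+1}\setminus X_t|)) = O(rk)$ since each bag has at most $k$ vertices, so the total time is $O(rk\cdot k\max_{t\in[r]}\prod_{i\in X_t}s_i) = O(rk^2\max_{t\in[r]}\prod_{i\in X_t}s_i)$, matching the claimed bound. (The $+2^k$ additive term from Theorem~\ref{thm:dp} for computing an optimal tree decomposition does not appear here because the path decomposition is given as input.)

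I expect no serious obstacle: the only point requiring a little care is the claim that the niceification does not increase the relevant products $\prod_{i\in X_s}s_i$, which hinges on using the specific ``interpolating'' construction (each new bag is sandwiched between two consecutive old bags and is contained in the larger of them, or more precisely in $X_t\cup X_{t+1}$ — and since the bucket sizes are at most $n$, even $\prod_{i\in X_t\cup X_{t+1}}s_i$ is dominated appropriately; but in fact the standard construction keeps each intermediate bag inside one of $X_t$, $X_{t+1}$ after an introduce or a forget, so the cleanest statement is that every bag of $\Tt'$ is a subset of $X_t$ or of $X_{t+1}$ for some $t$). One should also note that $D_{M,b}$ has $k$ vertices, so ``$|V(G)|$'' in the niceification proposition is $k$; this is why the node-count blows up only by a factor $k$ and the time bound carries the $k^2$ factor. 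With these remarks in place the lemma follows.
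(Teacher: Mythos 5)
Your proposal matches the paper's proof essentially verbatim: both niceify the given path decomposition by inserting forget and introduce nodes between consecutive bags, observe that every new bag is contained in one of the two adjacent original bags (so the products $\prod_{i\in X}s_i$ do not grow), count $O(rk)$ nodes, and apply Lemma~\ref{lem:node-process-time}. One small point of hygiene, which you eventually catch in your closing parenthetical: the paper inserts the \emph{forget} nodes before the \emph{introduce} nodes, which is what makes every intermediate bag a subset of $X_q$ or $X_{q+1}$; introducing first would create bags that are subsets only of $X_q\cup X_{q+1}$, and the appeal to ``bucket sizes at most $n$'' would not by itself salvage the bound $\max_t\prod_{i\in X_t}s_i$.
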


\begin{proof}
 We create a nice path decomposition of $D$ as follows.
 For every $q=1,\ldots,r-1$ we insert between $X_{q}$ and $X_{q+1}$ a sequence of forget nodes (one for every $j\in  X_q\setminus X_{q+1}$) followed by a sequence of introduce nodes (one for every $j\in  X_{q+1}\setminus X_{q}$). 
 Thus, the resulting path decomposition has at most $rk$ nodes.
 It is clear that for each of the added forget nodes with a bag $X$, we have $\prod_{i\in X}s_i \le \prod_{i\in X_q}s_i$, and 
 for each of the added introduce nodes with a bag $X$, we have $\prod_{i\in X}s_i \le \prod_{i\in X_{q+1}}s_i$.
 The claim follows by Lemma~\ref{lem:node-process-time}.
\end{proof}

\begin{theorem}
\label{thm:3.5}
 \probFiveOPTOpt can be solved in time $O(n^{3.5})$.
\end{theorem}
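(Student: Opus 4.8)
The plan is to refine the analysis of Theorem~\ref{thm:3.66} by treating separately the unique hard configuration of the case $|A|=2$. Recall from the proof of Theorem~\ref{thm:3.66} that with uniform bucket size $n^{\alpha}$ and $\alpha=2/3$ the only tight cases are $|A|=0$ (contributing $O(n^{5-2\alpha})$) and $|A|=2$ (contributing $O(n^{3+\alpha})$), both giving $O(n^{11/3})$ at $\alpha=2/3$; for all other values of $|A|$ the exponent is strictly smaller. So it suffices to push the two tight cases down to exponent $3.5$. For $|A|=0$ we simply keep uniform buckets but change $\alpha$: the contribution $O(n^{5-2\alpha})$ is $O(n^{3.5})$ as soon as $\alpha \ge 3/4$. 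For the remaining values of $|A|$, with $|A|\ge 1$ and $|A|\ne 2$ (i.e. $|A|\in\{1,3,4\}$), a direct check as in Theorem~\ref{thm:3.66} shows $\tw(I_M\cup A)\le 3$, so by Lemma~\ref{lem:general-time} the contribution is $O(n^{(5-|A|)(1-\alpha)+4\alpha})$; for $|A|=1$ this is $O(n^{4(1-\alpha)+4\alpha})=O(n^{4})$ — wait, that is too big, so in fact I must be more careful and note (as in Theorem~\ref{thm:3.66}'s Case 1) that for $|A|=1$ the graph $D$ is either of maximum degree $2$ or a $5$-cycle with one chord, hence $\tw(D)\le 2$, giving $O(n^{4(1-\alpha)+3\alpha})=O(n^{4-\alpha})$, which is $O(n^{3.5})$ for $\alpha\ge 1/2$. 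For $|A|=3,4$ the contribution $O(n^{(5-|A|)(1-\alpha)+4\alpha})$ is at most $O(n^{2(1-\alpha)+4\alpha})=O(n^{2+2\alpha})$, which is $O(n^{3.5})$ for $\alpha\le 3/4$. So with $\alpha=3/4$ every case except possibly $|A|=2$ is already within $O(n^{3.5})$.

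The crux is therefore the case $|A|=2$. With $\alpha=3/4$, Lemma~\ref{lem:general-time} would give $O(n^{3(1-\alpha)+4\alpha})=O(n^{3+\alpha})=O(n^{3.75})$, which is not good enough, and the bottleneck is, up to isomorphism, a single graph $D=([5],I_M\cup A)$ with $\tw(D)=3$: here $A$ consists of two edges of $P_5$ and $I_M$ is a cycle cover on $[5]$ such that the union has a bag of size $4$ in every tree decomposition. I would first enumerate, by the same computer-free reasoning as in Theorem~\ref{thm:3.66}, all isomorphism types of $([5],I_M\cup A)$ with $|A|=2$; all but one have treewidth $\le 2$ and hence contribute $O(n^{3(1-\alpha)+3\alpha})=O(n^{3})$, fine for any $\alpha$. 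For the one remaining graph $D^\star$ of treewidth $3$, instead of using uniform buckets I would design a non-uniform bucket partition: split the $k=5$ edge-indices so that the two vertices of $D^\star$ responsible for forcing the width-$4$ bag land in small buckets while the others sit in a single large bucket, or equivalently choose bucket sizes $s_i$ (powers of $n$ summing appropriately) so that, along an optimal path decomposition $(X_1,\dots,X_r)$ of $D^\star$, the quantity $\max_t \prod_{i\in X_t} s_i$ is minimized subject to the constraint that the total number of bucket assignments $\prod_i (n/s_i)$ times this maximum is $O(n^{3.5})$. This is exactly the setting of Lemma~\ref{lem:node-process-time-not-nice}, which lets me run the DP on a general (not necessarily nice) path decomposition of $D^\star$ with per-index bucket sizes $s_i$ and pay only $O(rk^2\max_t\prod_{i\in X_t}s_i)$.

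Concretely: for the graph $D^\star$ one exhibits a path decomposition whose only size-$4$ bag contains a specific pair of adjacent-in-$P_5$ vertices; assigning those two indices buckets of size $n^{1/2}$ and the other three indices buckets of size $n$ (one shared large bucket, compatibly with monotonicity since $A$ only constrains the two small-bucket indices) yields, for the heavy bag, $\prod_{i\in X_t}s_i \le n^{1/2}\cdot n^{1/2}\cdot n \cdot n = n^{3}$ — and one must check the bookkeeping so the true exponent lands at $3.5$ rather than $3$: the number of relevant bucket assignments consistent with $O_b=A$ on this partition is $O(n^{(5-|A|)/2})=O(n^{3/2})$ (the three non-constrained indices each range over $n/\text{bucketsize}$ choices, here made to total $n^{3/2}$), times the DP cost $O(n^{2})$ over the heavy size-$4$ bag under the chosen sizes, giving $O(n^{3.5})$. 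The main obstacle I expect is precisely this last bookkeeping: verifying that a single consistent choice of non-uniform bucket sizes $s_i$ simultaneously (i) respects $b$-monotonicity with $O_b=A$, (ii) keeps $\prod_{i\in X_t}s_i$ small in \emph{every} bag of the chosen path decomposition of $D^\star$, not just the heavy one, and (iii) makes the product of "number of assignments" and "DP cost" come out to exactly $n^{3.5}$; the graph-isomorphism enumeration for $|A|=2$ and the treewidth checks are routine, and Lemma~\ref{lem:node-process-time-not-nice} supplies the needed flexibility for non-nice decompositions.
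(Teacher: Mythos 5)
Your high-level plan is essentially the paper's: after observing that the only tight case at uniform $\alpha=3/4$ is $|A|=2$ with $\tw(D)=3$, you propose to subdivide the bucket granularity for the offending indices and run the DP via Lemma~\ref{lem:node-process-time-not-nice}. That is the right idea, and your reduction of all other cases is correct. However, the execution has three concrete problems.

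First, the bucket design you propose is internally inconsistent. You put the ``other three indices'' ($c,d,e$ say) into ``one shared large bucket of size $n$.'' But if $c,d,e$ all land in one bucket then $\{c,d\}$ and $\{d,e\}$ both belong to $O_b$, so $|O_b|\ge 3$, which contradicts being in the subcase $|A|=|O_b|=2$. The paper keeps $c,d,e$ in size-$n^{3/4}$ buckets and only refines the \emph{single} bucket that contains both $a$ and $b$, splitting it into $n^{\alpha/2}$ sub-buckets of size $n^{\alpha/2}$.

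Second, your arithmetic does not close. You compute the heavy size-$4$ bag as $\prod_i s_i \le n^{1/2}\cdot n^{1/2}\cdot n\cdot n = n^3$, but then multiply $O(n^{3/2})$ bucket assignments by a ``DP cost $O(n^2)$.'' Under your own sizes the DP cost for the heavy bag is $n^3$ (by Lemma~\ref{lem:node-process-time-not-nice}), which would give $n^{4.5}$, not $n^{3.5}$. The paper's numbers are: $n^{3(1-\alpha)+\alpha/2}$ assignments $\times\ n^{3\alpha}$ DP cost $=n^{3+\alpha/2}$, which is $n^{3.375}$ at $\alpha=3/4$.

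Third, and most importantly, you omit the essential case split after the refinement. Once the bucket containing $a$ and $b$ is subdivided, one must separately handle (i) $a$ and $b$ in the same sub-bucket, where there are fewer assignments ($n^{3-2.5\alpha}$) but the edge $\{a,b\}$ survives in $O_b$ so the width-$3$ path decomposition (size-$4$ bag) is needed, and (ii) $a$ and $b$ in \emph{different} sub-buckets, where there are more assignments ($n^{3-2\alpha}$), but then $\{a,b\}\notin O_b$, the dependence graph loses an edge, its treewidth drops to $2$, and a width-$2$ path decomposition (size-$3$ bags) suffices. Both subcases then give $O(n^{3+\alpha/2})$ and the bound closes. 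Without separating subcase (ii) and exploiting the dropped edge, the refinement alone does not yield $O(n^{3.5})$: the assignments where $a,b$ are in different sub-buckets would still seem to require the expensive width-$3$ decomposition. Your writeup gestures at ``design a non-uniform bucket partition'' but never identifies that what saves the analysis is the removal of the $O_b$-edge in the second subcase.

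So: right direction, same key lemma, but the proposal as written has an inconsistent bucket design, a numerical error, and misses the case split that makes the paper's refinement actually work.
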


\begin{proof}
We will refine the proof of Theorem~\ref{thm:3.5} by looking closer at the $|A|= 2$ case. 

\mycase{1: } $|A|=2$.
By Lemma~\ref{lem:general-time}, when $\tw(D) \le 2$, this case contributes $O(n^{3(1-\alpha) + 3\alpha})=O(n^{3})$ to the running time, so a problem arises only in case $\tw(D) = 3$.

\mycase{1.1: } The two edges of $A$ are incident.
Let $A=\{ab,bc\}$ and let $d$ and $e$ be the two vertices not incident to any edge of $A$.
We claim that $\pw(D) \le 2$.
Indeed, the sequence of bags in the desired path decomposition is $N[d]$, $(N[d]\cup N[e])\setminus\{d\}$, and $\{a,b,c\}$ when $de\in I_M$ and $N[d]$, $\{a,b,c\}$ and $N[e]$ otherwise.

\mycase{1.2: } The two edges of $A$ are not incident.
Let $A=\{ab,cd\}$ and let $e$ be the vertex not incident to any edge of $A$.
Assume $e$ is not incident with $\{a,b\}$.
Since $M$ is a perfect matching, $e$ is incident with $c$ or $d$, by symmetry assume $ec \in I_M$.
Then $\{c,d,e\}$, $N[c]\setminus\{e\}$, $\{a,b,d\}$ is a path decomposition of width 2.
Hence by symmetry we can assume $ae,ce \in I_M$. By Proposition~\ref{prop:Ipi} $a$ belongs to a cycle in $([5],I_M)$, so there are 3 subcases to consider

\mycase{1.2.1: } $ac \in I_M$. Then $I_M$ consists of the cycle $ace$ and edge $bd$. Then $\{a,c,e\}$, $\{a,b,c\}$ and $\{b,c,d\}$ is a path decomposition of width 2.

\mycase{1.2.2: } $ab \in I_M$. Then $I_M$ has one cycle $abdce$. Then $D$ is the same 5-cycle, so it has pathwidth 2.

\mycase{1.2.3: } $ad \in I_M$. Then $I_M$ has one cycle $adbce$. Note that $D$ contains a minor of $K_4$, so it has treewidth 3.
It follows that we need to modify the algorithm.
We partition the bucket containing $a$ and $b$ into $n^{\alpha/2}$ buckets of size $n^{\alpha/2}$ and we consider all possible assignments of $a$ and $b$ to these buckets.

First consider the assignments where $a$ and $b$ are in the same small bucket.
There are at most $n^{3(1-\alpha)}n^{\alpha/2}=n^{3-2.5\alpha}$ such assignments.
Consider a path decomposition of $D$ consisting of two adjacent nodes $p$ and $q$ with bags $X_p=\{a,b,c,d\}$ and $X_q=\{a,b,e\}$.
Note that each of the bags contains two vertices from a bucket of size $n^{\alpha/2}$ and at most two vertices from a bucket of size $n^{\alpha}$.
By Lemma~\ref{lem:node-process-time-not-nice} nodes $p$ and $q$ can be processed in time $O(n^{2\cdot \alpha / 2}\cdot n^{2\alpha})=O(n^{3\alpha})$.
Hence the computation for the assignments where $a$ and $b$ are in the same small bucket take $O(n^{3+\alpha/2})$ time in total.

Now consider the assignments where $a$ and $b$ are in different small buckets.
There are at most $n^{3(1-\alpha)}n^{2\alpha/2}=n^{3-2\alpha}$ such assignments.
However, the corresponding dependence graph $D'$ has one edge less than $D$, namely $E(D')=E(D)\setminus \{ab\}$.
Consider a path decomposition of $D'$ consisting of three consecutive bags $\{b,c,d\}$, $\{a,c,d\}$ and $\{a,c,e\}$.
Each of the bags contains two vertices from a bag of size $n^{\alpha}$ and one vertex from a bag of size $n^{\alpha/2}$.
By Lemma~\ref{lem:node-process-time-not-nice} each of the three nodes can be processed in time $O(n^{2\alpha+\alpha/2})=O(n^{2.5\alpha})$.
Hence the computation for the assignments where $a$ and $b$ are in different small buckets also take $O(n^{3+\alpha/2})$ time in total.

\mycase{2: } $|A| \ge 3$. 
By Lemma~\ref{lem:general-time}, this case contributes $O(n^{(5-|A|)(1-\alpha) + 4\alpha})=O(n^{2+2\alpha})$ to the running time.

To sum up, by the above and Case 1 of the proof of Theorem~\ref{thm:3.66}, the algorithm works in time $O(n^{5-2\alpha}+ n^{3+\alpha/2} + n^{2+2\alpha})$.
Putting $\alpha=3/4$ finishes the proof.
\end{proof}

The running time of Theorem~\ref{thm:3.5} can be further improved by a careful refinement of the $|A|=3$ case, as shown below.

\begin{theorem}
 \probFiveOPTOpt can be solved in time $O(n^{3.4})$.
\end{theorem}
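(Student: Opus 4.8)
The plan is to revisit the case analysis from the proof of Theorem~\ref{thm:3.5} and squeeze a better running time out of the $|A|=3$ branch, which is currently the bottleneck together with the $|A|=2$ case. Recall that with $\alpha=3/4$ the three contributions were $O(n^{5-2\alpha})=O(n^{3.5})$ from $|A|\le 2$, $O(n^{3+\alpha/2})=O(n^{3.375})$ from the refined $|A|=2$ subcase, and $O(n^{2+2\alpha})=O(n^{3.5})$ from $|A|\ge 3$. To reach $O(n^{3.4})$ I would first argue that in the $|A|\ge 3$ case the dependence graph $D=([5],I_M\cup A)$ actually has small pathwidth for most configurations: $A$ already contains three edges of the path $P_5$, so $A$ is either $\{12,23,34\}$, $\{12,23,45\}$, $\{12,34,45\}$, $\{23,34,45\}$ (three consecutive or two-plus-one), or $\{12,23,34,45\}$ etc.\ when $|A|=4$. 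In each such case one checks by hand (as in Cases 1.1--1.2 of Theorem~\ref{thm:3.5}) that $\pw(D)\le 2$ unless a $K_4$-minor is forced by the cyclic structure of $I_M$ guaranteed by Proposition~\ref{prop:Ipi}; the only obstructions are again a handful of graphs, which I would enumerate explicitly. For the $\pw(D)\le 2$ configurations Lemma~\ref{lem:general-time} gives $O(n^{(5-|A|)(1-\alpha)+3\alpha})$, which for $|A|=3$ is $O(n^{2+2\alpha})$ and for $|A|=4$ is $O(n^{1+3\alpha})$ — both acceptable once $\alpha$ is chosen appropriately.

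For the remaining hard $|A|=3$ configurations where $\tw(D)=3$, I would apply the same sub-bucketing trick used in Case 1.2.3 of Theorem~\ref{thm:3.5}: identify an edge $ab\in A$ whose removal drops the treewidth to $2$ (such an edge must exist since $K_5$ is the only $5$-vertex graph of treewidth $\ge 4$ and deleting one well-chosen edge from a treewidth-$3$ graph on $5$ vertices of this restricted shape yields treewidth $2$), split the bucket containing $a$ and $b$ into $n^{\alpha/2}$ sub-buckets, and branch on whether $a,b$ land in the same sub-bucket or not. When they coincide, there are $n^{3(1-\alpha)}\cdot n^{\alpha/2}=n^{3-5\alpha/2}$ assignments and a width-$2$-style path decomposition whose bags each carry two size-$n^{\alpha/2}$ coordinates and one size-$n^\alpha$ coordinate, processed in $O(n^{\alpha+\alpha})=O(n^{2\alpha})$ time by Lemma~\ref{lem:node-process-time-not-nice}; when they differ we lose the edge $ab$, the graph drops to pathwidth $2$, there are $n^{3-2\alpha}$ assignments and each bag costs $O(n^{2\alpha+\alpha/2})=O(n^{5\alpha/2})$. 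These contribute $O(n^{3-\alpha/2})$ and $O(n^{3+\alpha/2})$ respectively. Combining with the $|A|=2$ refinement ($O(n^{3+\alpha/2})$), the $|A|\le 1$ case ($O(n^{5-2\alpha})$), and the easy $|A|\ge 3$ subcases, the overall time is $O(n^{5-2\alpha}+n^{3+\alpha/2})$; setting $\alpha=4/5$ gives $5-2\alpha=3.4$ and $3+\alpha/2=3.4$, as desired.

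The main obstacle is the case bookkeeping: one must verify, for every valid connection pattern $M$ and every $3$-element $A\subseteq P_5$, that either $\pw([5],I_M\cup A)\le 2$ or there is a single edge of $A$ whose deletion leaves pathwidth $2$, and then exhibit the explicit three-bag path decompositions. The cyclic-components structure of $I_M$ from Proposition~\ref{prop:Ipi} keeps this finite and manageable — $([5],I_M)$ is either a $5$-cycle, a $3$-cycle plus an edge, or (in the multigraph sense) smaller cycles — but the enumeration is tedious and is really the crux of the argument. A secondary subtlety is making sure the sub-bucket sizes divide evenly, i.e.\ that $n^{\alpha/2}$ sub-buckets of size $n^{\alpha/2}$ partition a bucket of size $n^\alpha$; as elsewhere in the paper this is handled by taking ceilings and noting the ceilings only affect constant factors. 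Once the case analysis is in place, the running-time accounting is the routine optimization above, and the choice $\alpha=4/5$ balances the two surviving terms.
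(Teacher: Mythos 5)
Your high-level plan matches the paper's: keep the $|A|\le 2$ machinery from Theorem~\ref{thm:3.5}, sharpen $|A|=3$ by sub-bucketing, and rebalance with $\alpha=4/5$. But the sketch has a few concrete problems beyond ``tedious bookkeeping.''

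First, two arithmetic slips. For $|A|=3$ the number of bucket assignments with $O_b=A$ is $n^{(1-\alpha)(k-|A|)}=n^{2(1-\alpha)}$, not $n^{3(1-\alpha)}$ as you wrote. Second, in the ``$a,b$ in the same sub-bucket'' branch you describe bags with ``two size-$n^{\alpha/2}$ coordinates and one size-$n^\alpha$ coordinate,'' i.e.\ size-$3$ bags costing $O(n^{2\alpha})$; but in this branch $D$ still has the edge $ab$ and $\tw(D)=3$, so a bag of size $4$ is unavoidable, containing two full-size coordinates in addition to the two small ones, which costs $O(n^{3\alpha})$. The corrected product $n^{2(1-\alpha)}\cdot n^{\alpha/2}\cdot n^{3\alpha}=n^{2+\tfrac32\alpha}$ is fine (it is $n^{3.2}$ at $\alpha=4/5$), but not for the reason you gave.

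Third and more substantively, splitting a single bucket into $n^{\alpha/2}$ sub-buckets of size $n^{\alpha/2}$ does not suffice when $([5],A)$ has two components, a single edge $ab$ and a $2$-path $cde$ (i.e.\ $A$ is $\{12,23,45\}$ or $\{12,34,45\}$ up to symmetry). There the three vertices $c,d,e$ share one bucket, and if you only sub-bucket the pair you deleted an edge from, the remaining full-size coordinates in a width-$2$ bag already push the total to $n^{2+2\alpha}=n^{3.6}$ at $\alpha=4/5$. The paper handles this by a \emph{different} split: the bucket containing $c,d,e$ is divided into $n^{\alpha/3}$ sub-buckets of size $n^{2\alpha/3}$, and then three scenarios ($c,d$ together; $d,e$ together; all three apart) are analyzed separately, together with a further branch on $N_D(c)$ and an invocation of Proposition~\ref{prop:Ipi} to pin down $I_M$. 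This ``split into thirds'' step is the genuine new ingredient that your proposal is missing; without it, the $|A|=3$ two-component case does not reach $n^{3.4}$. Once that split is in place, the balancing $O(n^{5-2\alpha}+n^{3+\alpha/2}+n^{2+\tfrac53\alpha}+n^{1+3\alpha})$ at $\alpha=4/5$ gives the claimed bound, exactly as you anticipated for the final optimization.
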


\begin{proof}
 We will refine the proof of Theorem~\ref{thm:3.5} by looking closer at the $|A|=3$ case. 
 By Lemma~\ref{lem:general-time}, when $\tw(D) \le 2$, this case contributes $O(n^{2(1-\alpha) + 3\alpha})=O(n^{3})$ to the running time, so a problem arises only in case $\tw(D) = 3$.

\mycase{1: } $|A|=3$. 

\mycase{1.1: } The edges of $A$ form a 3-path $abcd$. Let $e$ be the vertex not incident to edges of $A$.
By Proposition~\ref{prop:Ipi} $e$ has a neighbor in $\{a,b,c,d\}$. 
By symmetry assume that $e$ has a neighbor in $\{c,d\}$. 
We partition the bucket containing $c$ and $d$ into $n^{\alpha/2}$ buckets of size $n^{\alpha/2}$ and we consider all possible assignments of $c$ and $d$ to these buckets.

First consider the assignments where $c$ and $d$ are in the same small bucket.
There are at most $n^{2(1-\alpha)}n^{\alpha/2}=n^{2-1.5\alpha}$ such assignments.
Consider the path decomposition of $D$ with two bags $\{a,b,c,d\}$ and $N[e]$.
Note that each of the bags contains at most two vertices from a bucket of size $n^{\alpha/2}$ and at most two vertices from a bucket of size $n^{\alpha}$.
By Lemma~\ref{lem:node-process-time-not-nice} each of the two nodes of path decomposition can be processed in time $O(n^{2\cdot \alpha / 2}\cdot n^{2\alpha})=O(n^{3\alpha})$.
Hence the computation for the assignments where $c$ and $d$ are in the same small bucket takes time $O(n^{2+1.5\alpha})$ in total.

Now consider the assignments where $c$ and $d$ are in different small buckets.
There are at most $n^{2(1-\alpha)}n^{2\alpha/2}=n^{2-\alpha}$ such assignments.
However, the corresponding dependence graph $D'$ has one edge less than $D$, namely $E(D')=E(D)\setminus \{cd\}$.
If $ed\in E(D')$, consider the path decomposition of $D'$ consisting of three consecutive bags $N[e], (N[d]\cup N[e])\setminus\{e\}, \{a,b,c\}$.
Otherwise, i.e., when $N_{D'}(e)=\{c\}$, consider the path decomposition $N[e], \{a,b,c\}, N[d]$.
In both cases, each of the bags is of size at most three and contains at least one vertex from a bucket of size $n^{\alpha/2}$.
By Lemma~\ref{lem:node-process-time-not-nice} each of the three nodes of path decomposition can be processed in time $O(n^{2\alpha+\alpha/2})=O(n^{2.5\alpha})$.
Hence the computation for the assignments where $c$ and $d$ are in different small buckets takes $O(n^{2+1.5\alpha})$ time in total.

\mycase{1.2: } Graph $([5],A)$ has two connected components: a single edge $ab$ and a 2-path $cde$.
Note that $N(a)\cap \{c,d,e\}=N(b)\cap \{c,d,e\}=\{c,e\}$ contradicts Proposition~\ref{prop:Ipi}.
It follows that one of the following four cases holds: $N(a)\cap \{c,d,e\} \subseteq \{c,d\}$, $N(a)\cap \{c,d,e\} \subseteq \{d,e\}$, $N(b)\cap \{c,d,e\} \subseteq \{c,d\}$, $N(b)\cap \{c,d,e\} \subseteq \{d,e\}$.
Hence, by symmetry, we can assume the first of them, i.e., $N(a)\cap \{c,d,e\} \subseteq \{c,d\}$.

We partition the bucket containing $c$, $d$ and $e$ into $n^{\alpha/3}$ buckets of size $n^{\frac{2}{3}\alpha}$.

First we generate all assignments where $c$ and $d$ are in the same small bucket.
There are at most $n^{2(1-\alpha)}n^{\alpha/3}=n^{2-\frac{5}{3}\alpha}$ such assignments.
In DP we use the following path decomposition: $\{b,c,d,e\}, N[a]$.
Note that $N[a]\setminus\{c,d\} = \{a,b\}$, so each of the bags contains at most two vertices from a bucket of size $n^{\frac{2}{3}\alpha}$ and two vertices from a bucket of size $n^{\alpha}$.
By Lemma~\ref{lem:node-process-time-not-nice} each of the three nodes of path decomposition can be processed in time $O(n^{2\cdot \frac{2}{3}\alpha}\cdot n^{2\alpha})=O(n^{\frac{10}{3}\alpha})$.
Hence the computation for the assignments where $c$ and $d$ are in the same small bucket takes time $O(n^{2+\frac{5}{3}\alpha})$ in total.

Here we branch into two subcases.

\mycase{1.2.1: } $N_D(c)=\{a,b,d\}$.
Then we generate all remaining bucket assignments, i.e., where $c$ and $d$ are in different small buckets.
There are at most $n^{2(1-\alpha)}n^{2\alpha/3}=n^{2-\frac{4}{3}\alpha}$ such assignments.
In the new dependence graph $D'$ we have $E(D')=E(D)\setminus \{cd\}$.
By Proposition~\ref{prop:Ipi} and our assumptions $N(a)\cap \{c,d,e\} \subseteq \{c,d\}$ and $N(c)=\{a,b,d\}$, we get that either $A$ has two components, namely $A=\{ab,bc,ca,de\}$ or $A$ is a single 5-cycle $A=\{ad,de,eb,bc,ca\}$. 
In both cases we use the path decomposition $\{b,d,e\}, \{a,b,d\}, \{a,b,c\}$.
Each of the bags is of size at most three and contains at least one vertex from a bucket of size $n^{\frac{2}{3}\alpha}$.
By Lemma~\ref{lem:node-process-time-not-nice} each of the three nodes of path decomposition can be processed in time $O(n^{2\alpha+\frac{2}{3}\alpha})=O(n^{\frac{8}{3}\alpha})$.
Hence the computation for the assignments where $c$ and $d$ are in different small buckets takes $O(n^{2+\frac{4}{3}\alpha})$ time in total.

\mycase{1.2.2: } $N_D(c)\ne\{a,b,d\}$.
We continue by generating all assignments where $d$ and $e$ are in the same small bucket.
There are at most $n^{2(1-\alpha)}n^{\alpha/3}=n^{2-\frac{5}{3}\alpha}$ such assignments.
In the new dependence graph $D'$ we have $E(D')=E(D)\setminus \{de\}$.
In DP we use the following path decomposition: $N[c]\cup\{d,e\}, \{a,b,d,e\}$.
Note that each of the bags contains two vertices from a bucket of size $n^{\frac{2}{3}\alpha}$ and at most two vertices from a bucket of size at most $n^{\alpha}$.
By Lemma~\ref{lem:node-process-time-not-nice} each of the three nodes of path decomposition can be processed in time $O(n^{2\cdot \frac{2}{3}\alpha}\cdot n^{2\alpha})=O(n^{\frac{10}{3}\alpha})$.
Hence the computation for the assignments where $d$ and $e$ are in the same small bucket takes time $O(n^{2+\frac{5}{3}\alpha})$ in total.

Finally, we generate all assignments where $c$, $d$ and $e$ are in three different small buckets.
There are at most $n^{2(1-\alpha)}n^{3\alpha/3}=n^{2-\alpha}$ such assignments.
In the new dependence graph $D'$ we have $E(D')=E(D)\setminus \{cd,de\}=I_M \cup \{ab\}$.
By Proposition~\ref{prop:Ipi}, $I_M$ is a 5-cycle or a 3-cycle and a single edge (not incident to the cycle).
Hence $D'$ is an outerplanar graph, and hence it has a tree decomposition of width 2.
In this decomposition every bag has size a most 3 and if it has size 3, then it contains at least one vertex from $\{c,d,e\}$.
Hence every bag $B$ contains at least $|B|-2$ vertices from a bucket of size $n^{\frac{2}{3}\alpha}$.
By Lemma~\ref{lem:node-process-time-not-nice} each of the three nodes of path decomposition can be processed in time $O(n^{2\alpha+\frac{2}{3}\alpha})=O(n^{\frac{8}{3}\alpha})$.
Hence the computation for the assignments where $c$, $d$ and $e$ are in three different small bucket takes time $O(n^{2+\frac{5}{3}\alpha})$ in total.

\mycase{2: } $|A| = 4$. 
By Lemma~\ref{lem:general-time}, this case contributes $O(n^{(1-\alpha) + 4\alpha})=O(n^{1+3\alpha})$ to the running time.

To sum up, by Case 1 of the proof of Theorem~\ref{thm:3.66}, Case 1 of the proof of Theorem~\ref{thm:3.5} and Cases 1 and 2 above, the algorithm works in time $O(n^{5-2\alpha}+ n^{3+\alpha/2}+ n^{2+\frac{5}{3}\alpha}+ n^{1+3\alpha})$.
Putting $\alpha=4/5$ finishes the proof.
 \end{proof}


\section{Lower bound for $k=4$}

In this section we show a hardness result for \probOPTOpt{4}. 
More precisely, we work with the decision version, called \probOPTDec{4}, where the input is the same as in \probOPTOpt{4} and the  goal is to determine if there is a $4$-move which improves the weight of the given Hamiltonian cycle.
To this end, we reduce the \probNegativeTriangle problem, where the input is an undirected, complete graph $G$, and a weight function $w:E(G)\rightarrow \mathbb{Z}$.
The goal is to determine whether $G$ contains a triangle whose total edge-weight is negative.

\begin{lemma}
\label{lem:reduction}
	Every instance $I=(G, w)$ of \probNegativeTriangle can be reduced in $O(|V(G)|^2)$ time into
	an instance $I'=(G', w', C)$ of \probOPTDec{4} such that $G$ contains a triangle of negative weight iff $I'$ admits an improving $4$-move.
	Moreover, $|V(G')| = O(|V(G)|)$, and the maximum absolute weight in $w'$ is larger by a constant factor than the maximum absolute weight in $w$.
\end{lemma}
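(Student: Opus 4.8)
The plan is to encode an instance of \probNegativeTriangle as a TSP instance on a Hamiltonian cycle whose edges are split into four ``gadget blocks'' so that the only way to profitably apply a $4$-move is to pick one edge from each of three ``vertex blocks'' in a way that selects a triangle $\{x,y,z\}$, and the fourth removed edge is forced by the connection pattern. First I would construct the cycle $C$ in $G'$ as a concatenation of three long segments $P_X,P_Y,P_Z$, one per ``copy'' of $V(G)$, plus a short connector segment; each vertex $v\in V(G)$ is represented by an edge $e^X_v$ in $P_X$ (and similarly $e^Y_v$, $e^Z_v$), and the remaining cycle edges are given huge weight $W$ (a value, say, larger than $100$ times the sum of absolute values of all weights in $w$) so that no improving $4$-move ever removes them. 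The weights on the representative edges and on the replacement edges available in $G'$ are then chosen so that, when a $4$-move removes $e^X_x$, $e^Y_y$, $e^Z_z$ (and the forced fourth edge), the gain equals exactly $-w(xy)-w(yz)-w(zx)$ up to an additive constant and a scaling factor; thus an improving $4$-move exists iff some triangle has negative weight.

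The technical heart is choosing the replacement edges and verifying the cycle condition. After fixing a valid connection $4$-pattern $M$, removing four cycle edges leaves four paths, and the added edges must reconnect them into a single Hamiltonian cycle; I would select $M$ (a fixed matching on $[8]$) so that the reconnection pairs up the endpoints of $e^X_x$, $e^Y_y$, $e^Z_z$ across the three blocks, and I would use the connector segment plus a fourth, ``dummy'' removed edge of small controlled weight to absorb the parity/orientation constraints so that the result is always a single tour. The replacement edges between the endpoint of $x$'s representative and the endpoint of $y$'s representative get weight derived from $w(xy)$ (e.g.\ $w'(\cdot)=w(xy)+W'$ for a suitable offset $W'$ that cancels in the gain of a valid reconnection but makes any ``mismatched'' reconnection non-improving), and all other edges of $G'$ not on $C$ get weight $W$ so they are never added. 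One must also ensure that $4$-moves which do \emph{not} respect the intended structure — e.g.\ picking two representatives from the same block, or mixing in the connector edges, or using a different connection pattern — have gain at most $0$; this is handled by the dominating weight $W$ on all non-representative cycle edges and on all ``wrong'' replacement edges, together with the observation (from the paper's setup) that there are only $O(1)$ connection patterns to rule out, each by a short case check.

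The main obstacle I expect is the bookkeeping to guarantee that reconnection always yields a single cycle (validity of the connection pattern) \emph{and} that the gain isolates exactly the triangle weight with no leftover dependence on \emph{which} vertices $x,y,z$ are chosen: because the three representative edges sit in three different blocks at positions depending on $x,y,z$, the ``increasing embedding'' constraint is automatically satisfied, but the endpoints being left/right endpoints of those edges means the added-edge weights must be set up symmetrically in the two endpoints of each representative edge, which forces each vertex $v$ to be represented by a \emph{pair} of parallel-in-spirit edges or by a small constant-size gadget rather than a single edge. I would resolve this exactly as de~Berg et al.\ do for their $k=3$ lower bound — replacing each vertex by a short path of $O(1)$ edges so that the ``selection'' edge has both endpoints generic — and then add the few extra tricks needed because a $4$-move has one more edge to remove than a $3$-move: namely a fixed ``anchor'' edge in the connector block whose removal and re-addition is weight-neutral and pins down the global cycle structure. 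Finally I would check the size and weight bounds: the three blocks plus connector have $O(|V(G)|)$ edges, hence $|V(G')|=O(|V(G)|)$; the construction is clearly computable in $O(|V(G)|^2)$ time (writing down all $\binom{|V(G)|}{2}$ triangle-edge weights into the replacement-edge weights); and $W=O(\sum_e|w(e)|)$ together with the offsets $W'$ are all within a constant factor of $\max_e|w(e)|$ after the standard trick of scaling $w$ by $|V(G)|$ first, giving the claimed bound on the maximum absolute weight of $w'$.
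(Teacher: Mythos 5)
Your plan takes a genuinely different route from the paper's, and while it is not obviously unsalvageable, it leaves a gap exactly where the paper puts its key trick.

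The paper does not use three ``vertex blocks'' plus a connector. It builds a cycle $C$ with \emph{two} halves: an ``up'' segment $a_1,b_1,a_2,b_2,\ldots,a_n,b_n$ and a mirrored ``down'' segment $b_n',a_n',\ldots,b_1',a_1'$, joined by two fixed edges. All three selection edges come from the \emph{same} up-segment and are forced to have indices $i<j<k$ by the tour order; the fourth removed edge is $(a_\ell',b_\ell')$ from the down-segment. The added edges $(a_i,b_j)$, $(a_j,b_k)$, $(a_k,b_\ell')$, $(a_\ell',b_i)$ naturally reconnect the four arcs into a single Hamiltonian cycle, realize the cyclic sum $w(v_iv_j)+w(v_jv_k)+w(v_kv_i)$, and force $\ell=k$. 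Crucially, the removed-edge weights are $M_1,M_1,M_1,-3M_1$ (with $M_1=5W+1$, and a second threshold $M_2$ fixing the rest of the tour), so the only improving $4$-moves remove exactly one ``$-3M_1$'' down-edge and exactly three ``$M_1$'' up-edges, after which every added edge is pinned to the intended pattern. This weight balance is the paper's central technical device.

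Your proposal replaces that balance with a ``weight-neutral anchor'' in the connector block that is ``removed and re-added.'' That makes the intended witnessing move degenerate ($E^-\cap E^+\neq\emptyset$): it is a $3$-move in disguise. The paper explicitly observes that the $k=3$ lower bound of de~Berg et al.\ does \emph{not} imply the $k=4$ lower bound, precisely because a $4$-move has an extra degree of freedom that can create improving moves not present for $k=3$. So the hard direction of your lemma --- \emph{every} improving $4$-move on $I'$ comes from a negative triangle --- needs an argument ruling out every genuine non-degenerate $4$-move (removing two edges from the same block, mixing block edges with connector edges, using a different connection pattern, etc.). You gesture at this (``handled by the dominating weight $W$ \ldots together with the observation that there are only $O(1)$ connection patterns to rule out, each by a short case check'') but do not carry out the case analysis; and since the anchor is weight-neutral rather than playing the role of the $-3M_1$ edge, the weights do not pin down the structure the way the paper's do. The symmetry/endpoint issue you mention (which endpoint of a representative edge receives which added edge) is also deferred to a de~Berg-style gadget without specification, whereas the paper sidesteps it entirely by using the tour order $i<j<k$ and left/right endpoint roles in a single segment. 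In short: the high-level plan is defensible, but the construction and the verification of the ``only if'' direction --- the actual content of the lemma --- are not provided, and the three-block design makes both harder than the paper's two-level one.
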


\begin{proof}
	\newcommand{\Vup}{V_{\mathsf{up}}}
	\newcommand{\Vdown}{V_\mathsf{down}}
	Let $V(G) = \{v_1, \ldots, v_n\}$.
	Then let $\Vup = \{a_1, b_1, \ldots, a_n, b_n\}$,
	$\Vdown = \{a_1', b_1', \ldots, a_n', b_n'\}$
	and $V(G') = \Vup\ \dot{\cup}\ \Vdown$.
	Let $W$ be the maximum absolute value of a weight in $w$.
	Then let $M_1 = 5W + 1$ and $M_2 = 21M_1 + 1$
	and let
	\[
		w'(u, v) = \begin{cases}
			0 & \text{if } (u, v) \text{ is of the form }
				(a_i, b_i')\\
			w(v_i, v_j) & \text {if } (u, v)
				\text{ is of the form }
				(a_i, b_j) \text{ for } i < j
				\text{ or } (a_i', b_j) \text { for } j < i\\
			M_1 & \text{if } (u, v) \text{ is of the form }
				(a_i, b_i)\\
			-3M_1 & \text{if } (u, v) \text{ is of the form }
				(a_i', b_i')\\
			-M_2 & \text{if } (u, v) \text{ is of the form }
				(b_i, a_{i + 1})
				\text{ or } (b_i', a_{i + 1}')
				\text{ or } (a_1, a_1')
				\text{ or } (b_n, b_n')\\
			M_2 & \text{in other case}.
		\end{cases}
	\]
	Note that the cases are not overlapping. (Note also that although some weights are negative, we can get an equivalent instance with nonnegative weights by adding $M_2$ to all the weights.)
	Let $C = {a_1, b_1, \ldots, a_n, b_n, b_n', a_n', \ldots, b_1', a_1'}$.
	
	\begin{figure}
		\begin{center}
			\usetikzlibrary{decorations.pathmorphing}
\usetikzlibrary{positioning}

\tikzstyle{vertex}=[
  circle,
  draw,
  fill=white,
  inner sep=0pt,
  minimum width=16pt,
]

\tikzstyle{label}=[
  midway,
]

\tikzstyle{fixedpath}=[
  very thick,
  decorate,
  decoration=snake,
]

\tikzstyle{fixededge}=[
	very thick,
]

\tikzstyle{removededge}=[
  ultra thick,
  dotted,
  red,
]

\tikzstyle{possibleedge}=[
	semithick,
	dashed,
]

\tikzstyle{usededge}=[
	ultra thick,
	dashed,
	blue,
]

\begin{tikzpicture}[very thick, scale=1.1, yscale=0.8]
\draw [fixedpath](-0.5,0.5) .. controls (-2.5,-1) and (-7,0) .. (-7.5,3);

\node [vertex] (a_i) at (-6.5,6) {$a_i$};
\node [vertex] (b_i) at (-5.5,6.5) {$b_i$};
\node [vertex] (a_j) at (-3.5,7) {$a_j$};
\node [vertex] (b_j) at (-2.5,7) {$b_j$};
\node [vertex] (a_k) at (-0.5,6.5) {$a_k$};
\node [vertex] (b_k) at (0.5,6) {$b_k$};

\node [vertex] (b_k') at (0.5,1) {$b_k'$};
\node [vertex] (a_k') at (-0.5,0.5) {$a_k'$};

\node [vertex] (b_n) at (1.5,4) {$b_n$};
\node [vertex] (b_n') at (1.5,3) {$b_n'$};
\node [vertex] (a_1) at (-7.5,4) {$a_1$};
\node [vertex] (a_1') at (-7.5,3) {$a_1'$};

\draw[fixedpath] (b_i) -- (a_j);
\draw [fixedpath] (b_j) -- (a_k);
\draw [fixedpath](b_k) -- (b_n);
\draw [fixedpath](b_n') -- (b_k');
\draw [fixedpath](a_1) -- (a_i);

\draw [fixededge] (a_1') edge (a_1);
\draw [fixededge] (b_n') edge (b_n);

\draw [removededge] (a_i) -- (b_i) node[label, above left=2pt and -6pt] {$M_1$};
\draw [removededge] (a_j) -- (b_j) node[label, above=2pt] {$M_1$};
\draw [removededge] (a_k) -- (b_k) node[label, above right=2pt and -4pt] {$M_1$};
\draw [removededge] (b_k') -- (a_k')  node[label, below right=2pt and -9pt] {$-3M_1$};

\draw [usededge] (a_k) -- (b_k') node[label, left] {0};
\draw [usededge] (a_i) to [bend right] node[label, below right] {$w(v_i, v_j)$} (b_j);
\draw [usededge] (a_j) to [bend right] node[label, below] {$w(v_j, v_k)$} (b_k);
\draw [usededge] (a_k') -- (b_i) node[label, left] {$w(v_i, v_k)$};
\end{tikzpicture}
			\caption{A simplified view of the instance
				$(G', w', C)$ together
				with an example of a $4$-move.
				The added edges are marked as blue (dashed) and the
				removed edges are marked as red (dotted).
			}
		\end{center}
	\end{figure}
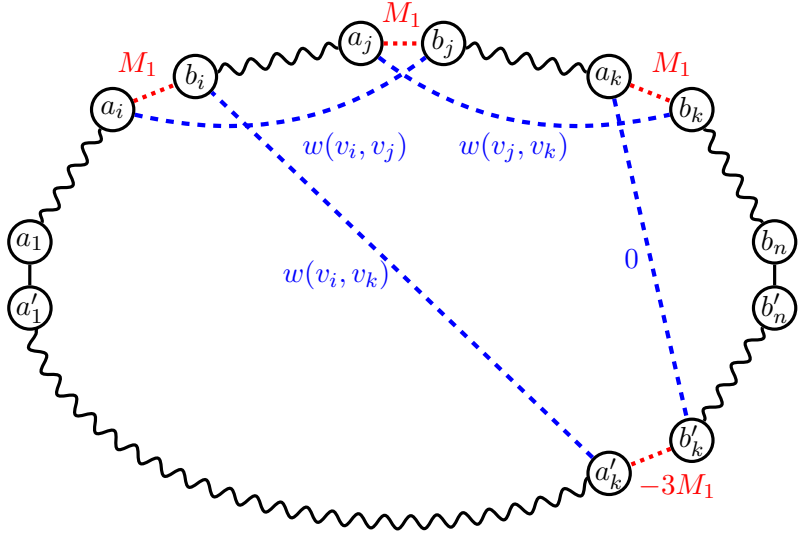

	If there is a negative triangle $v_i, v_j, v_k$ for
	some $i < j < k$ in $G$
	then we can improve $C$ by removing edges
	$(a_i, b_i), (a_j, b_j), (a_k, b_k)$ and $(a_k', b_k')$
	and inserting edges
	$(a_i, b_j), (a_j, b_k), (a_k, b_k')$ and $(a_k', b_i)$.
	We obtain a cycle
	\[
	  a_1, b_1, \ldots a_i,
	  b_j, a_{j + 1} \ldots, a_k,
	  b_k', a_{k + 1}', \ldots, b_n',
	  b_n, a_n, \ldots, b_k,
	  a_j, b_{j - 1}, \ldots b_i,
	  a_k', b_{k - 1}', \ldots, a_1'.
	\]
	The total weight of the removed edges is
	$M_1 + M_1 + M_1 + (-3M_1) = 0$
	and the total weight of the inserted edges is
	$w(v_i, v_j) + w(v_j, v_k) + 0 + w(v_k, v_i) < 0$
	hence indeed the cycle is improved.
	
	Let us assume that $C$ can be improved by removing $4$ edges
	and inserting $4$ edges.
	Note that all the edges of weight $-M_2$ belong to $C$
	and all the edges of weight $M_2$ do not belong to $C$.
	All the other edges have absolute values of their weights
	bounded by $3M_1$.
	Therefore even a single edge of the weight $-M_2$ cannot
	be removed
	and even a single edge of the weight $M_2$ cannot be
	inserted because a loss of $M_2$ cannot be compensated
	by any other $7$ edges (inserted or removed), as they can result in a gain of at most $7\cdot 3 M_1<M_2$.
	Hence in the following we treat edges of weights $\pm M_2$
	as fixed, i.e., they cannot be inserted or removed from the 
	cycle.
	Note that the edges of $C$ that can
	be removed are only the edges of the form $(a_i, b_i)$
	(of weights $M_1$)
	and $(a_i', b_i')$ (of weights $-3M_1$).
	
	All the edges of weight $-3M_1$ already belong
	to $C$ and all the remaining edges of the graph
	that can be inserted or removed from the cycle
	are the edges
	of the weight $M_1$ belonging to
	$C$ and the edges of absolute values of
	their weights bounded by $W.$
	Therefore we
	cannot remove more than one edge of the weight $-3M_1$ from $C$
	because a loss of $6M_1$
	cannot be compensated by any $2$ removed and $4$
	inserted edges (we could potentially gain only
	$2M_1 + 4W < 3M_1$). Hence we can remove at most one
	edge of the weight $-3M_1$
	from $C$.
	For the same reason if we do remove one edge of the weight
	$-3M_1$  (i.e., of the form $(a_i', b_i')$)
	from $C$ we need to remove also three edges
	of the weights $M_1$ (i.e., of the form $(a_j, b_j)$)
	in order to compensate the loss
	of $3M_1$
	(otherwise we could compensate up to $2M_1 + 5W < 3M_1)$.
	
	Note that the only edges that can be added
	(i.e., the edges with the weights less than $M_2$ that do not
	belong to $C$)
	are
	the edges of the form $(a_i, b_j)$ for $i < j$, 
	$(a_i', b_j)$ for $j < i$ and $(a_i, b_i')$.
	Therefore
	if the removed edges from $G[\Vup]$ are
	$(a_{i_1}, b_{i_1}), \ldots, (a_{i_\ell}, b_{i_\ell})$
	for some $i_1 < \ldots < i_\ell$
	(and no other edges belonging to $G[\Vup]$)
	then in order to close the cycle
	we need to insert some edge incident to $b_{i_1}$
	but since for any $i_0 < i_1$
	there is no removed edge $(a_{i_0}, b_{i_0})$
	it cannot be an edge of the form
	$(a_{i_0}, b_{i_1})$.
	Hence it has to be an edge of the form $(a_j', b_{i_1})$
	for some $j > i_1$.
	But then also the edge $(a_j', b_j')$ has to be removed.
	Therefore if we remove at least one edge of the form
	$(a_i, b_i)$ then we need to remove also an edge of the
	form $(a_j', b_j')$ (and as we know this implies also that
	at least three edges of the form $(a_i, b_i)$ have to be
	removed).
	So if any edge is removed, then exactly three edges
	of the form $(a_i, b_i)$ and exactly one edge of the form
	$(a_j', b_j')$ have to be removed.
	Note that this implies also that the total weight of the removed
	edges has to be equal to zero.
	
	Clearly the move has to remove at least one edge in order
	to improve the weight of the cycle.
	Let us assume that the removed edges are
	$(a_i, b_i)$, $(a_j, b_j)$ and $(a_k, b_k)$
	for some $i < j < k$
	and $(a_\ell', b_\ell')$ for some $\ell$.
	For the reason mentioned in the previous paragraph
	in order to obtain a Hamiltonian cycle one of the inserted edges has to be the edge $(a_\ell', b_i)$.
	Also the vertex $b_j$ has to be connected with something
	but the vertex $a_\ell'$ is already taken
	and hence it has to be connected with the vertex $a_i$.
	Similarly the vertex $b_k$ has to be connected with
	$a_j$ because $a_\ell'$ and $a_i$ are already taken.
	Thus $a_k$ has to be connected with $b_\ell'$
	and this means that $k = \ell$.
	The total weight change of the move is negative
	and therefore the total weight of the added
	edges has to be negative
	(since the total weight of the removed edges is
	equal to zero).
	Thus we have
	$w(v_i, v_j) + w(v_j, v_k) + w(v_k, v_i)
	= w'(a_i, b_j) + w'(a_j, b_k) + w'(a_k', b_i) + w'(a_k, b_k')
	< 0$.
	So $v_i, v_j, v_k$ is a negative triangle in $(G, w)$.
\end{proof}

\begin{theorem}
  If there is $\epsilon>0$ such that \probOPTDec{4} admits an algorithm in time $O(n^{3-\epsilon} \cdot \polylog(M))$, then there is $\delta>0$ such that both \probNegativeTriangle and \probAPSP admit an algorithm in time $O(n^{3-\delta} \cdot \polylog(M))$, where in all cases we refer to $n$-vertex input graphs with integer weights from $\{-M, \ldots, M\}$.
\end{theorem}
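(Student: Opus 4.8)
The plan is to chain the reduction of Lemma~\ref{lem:reduction} with the (well known) subcubic equivalence between \probNegativeTriangle and \probAPSP. Since Lemma~\ref{lem:reduction} already does all the combinatorial work, the present statement is essentially a bookkeeping argument about how parameters propagate through two reductions.

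First I would assume we are given an algorithm $\Aa$ solving \probOPTDec{4} in time $O(n^{3-\epsilon}\polylog(M))$ on $n$-vertex instances with integer weights in $\{-M,\ldots,M\}$. Take an arbitrary $n$-vertex instance $I=(G,w)$ of \probNegativeTriangle with weights in $\{-M,\ldots,M\}$. Apply Lemma~\ref{lem:reduction} to produce, in $O(n^2)$ time, an instance $I'=(G',w',C)$ of \probOPTDec{4} with $|V(G')|=O(n)$ and all weights bounded in absolute value by $O(M)$, such that $I'$ admits an improving $4$-move iff $G$ contains a negative triangle. (If one prefers nonnegative weights as in the paper's convention, shift every weight of $w'$ up by $M_2$ as noted in the proof of the lemma; this is harmless, because the gain of a $4$-move is invariant under adding a fixed constant to all edge weights, and the weight bound stays $O(M)$.) Running $\Aa$ on $I'$ therefore decides whether $G$ has a negative triangle in time $O\big((O(n))^{3-\epsilon}\polylog(O(M))\big)=O(n^{3-\epsilon}\polylog(M))$, and adding the $O(n^2)$ cost of the reduction changes nothing (if $\epsilon\ge 1$ we may simply replace $\epsilon$ by $\tfrac12$). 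Hence \probNegativeTriangle on $n$-vertex graphs with weights in $\{-M,\ldots,M\}$ is solvable in $O(n^{3-\epsilon}\polylog(M))$ time.

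Next I would invoke the subcubic equivalence of \probNegativeTriangle and \probAPSP: a truly subcubic algorithm (in the number of vertices, with polylogarithmic dependence on the maximum weight) for the decision problem \probNegativeTriangle implies, for some $\delta'>0$, a truly subcubic algorithm for \probAPSP on $n$-vertex graphs with weights in $\{-M,\ldots,M\}$. Combining this with the previous paragraph, both \probNegativeTriangle and \probAPSP admit algorithms running in time $O(n^{3-\delta}\polylog(M))$ with $\delta=\min(\epsilon,\delta')>0$, which is exactly the claim.

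The main obstacle is not in this proof at all: it lies entirely in Lemma~\ref{lem:reduction} (which we may assume) and in the cited equivalence (a black box). Within the present argument the only points needing a moment of care are that the weight growth in the reduction is only by a constant factor, so $\polylog(M)$ is preserved; that the additive $O(n^2)$ reduction time is dominated by $n^{3-\epsilon}\polylog(M)$; that the weight shift used to obtain nonnegative weights does not affect which $4$-moves are improving; and that the equivalence is applied to the \emph{decision} version of \probNegativeTriangle, which is precisely what Lemma~\ref{lem:reduction} reduces from.
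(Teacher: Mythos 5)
Your proof is correct and follows the same route as the paper: apply Lemma~\ref{lem:reduction} to transfer a subcubic \probOPTDec{4} algorithm to \probNegativeTriangle, then invoke the Vassilevska-Williams--Williams subcubic equivalence between \probNegativeTriangle and \probAPSP. The paper gives this in one sentence; you simply spell out the parameter bookkeeping explicitly.
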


\begin{proof}
The first part of the claim follows from Lemma~\ref{lem:reduction}, while the second part follows from the reduction of \probAPSP to \probNegativeTriangle by Vassilevska-Williams and Williams (Theorem 1.1 in~\cite{WilliamsW10}).
\end{proof}

\bibliographystyle{abbrv}
\bibliography{localtsp}

\end{document}